%
\documentclass[runningheads]{llncs}
\usepackage{graphicx}
%

\usepackage{mathtools}
\usepackage{cite}
 \usepackage{algorithm}
 \usepackage{amssymb}
 \usepackage{subcaption}
\captionsetup{compatibility=false}
 \usepackage{algorithmic}
 \usepackage{amsmath}
 \usepackage{mathrsfs}
 \usepackage{yhmath}
 \usepackage{epsfig}
\usepackage{color,soul}

\raggedbottom

\makeatletter
\renewcommand\section{\@startsection{section}{1}{\z@}%
            {-8\p@ \@plus -4\p@ \@minus -4\p@}%
            {6\p@ \@plus 4\p@ \@minus 4\p@}%
            {\normalfont\large\bfseries\boldmath
            \rightskip=\z@ \@plus 8em\pretolerance=10000 }}
\renewcommand\subsection{\@startsection{subsection}{2}{\z@}%
            {-8\p@ \@plus -4\p@ \@minus -4\p@}%
            {6\p@ \@plus 4\p@ \@minus 4\p@}%
            {\normalfont\normalsize\bfseries\boldmath
            \rightskip=\z@ \@plus 8em\pretolerance=10000 }}
\renewcommand\subsubsection{\@startsection{subsubsection}{3}{\z@}%
            {-4\p@ \@plus -4\p@ \@minus -4\p@}%
            {-1.5em \@plus -0.22em \@minus -0.1em}%
            {\normalfont\normalsize\bfseries\boldmath}}
\makeatother

\setlength\floatsep{0.25\baselineskip plus 3pt minus 2pt}
\setlength\textfloatsep{0.5\baselineskip plus 3pt minus 2pt}
\setlength\intextsep{0.25\baselineskip plus 3pt minus 2 pt}


\usepackage{etoolbox}

\makeatletter
\expandafter\patchcmd\csname\string\algorithmic\endcsname{\itemsep\z@}{\itemsep=0ex plus4pt}{}{}
\makeatother

\usepackage{enumitem}
\setlist[description]{noitemsep, topsep=0pt}

\definecolor{mygray}{gray}{0.1}

\makeatletter
\expandafter\patchcmd\csname\string\algorithmic\endcsname{\itemsep\z@}{\itemsep=-0.1ex plus4pt}{}{}
\makeatother



 \newtheorem{observation}{Observation}
\renewenvironment{proof}      
{
 \noindent{\bf Proof.} 
}
{
 \hfill$\qed$
}
\newenvironment{proofsketch}      
{
 \noindent{\bf Proof Sketch.} 
}
{
 \hfill$\qed$
}

\newcommand{\spp}{\mathcal{U}}
\newcommand{\prs}{\mathcal{W}^a}
\newcommand{\qp}{\sigma}
\newcommand{\qs}{\overline{\sigma}}
\newcommand{\range}{r}
\newcommand{\wv}{\omega}
\newcommand{\norm}[1]{\|#1\|_{_1}}

\newcommand{\vr}{\mathcal{V}_{max}}

\begin{document}
\title{Rectilinear Shortest Paths Among Transient Obstacles}
%
%
\author{Anil Maheshwari \and
Arash Nouri \and
J\"org-R\"udiger Sack}
\authorrunning{Anil Maheshwari \and
Arash Nouri \and
J\"org-R\"udiger Sack}
%
\institute{School of Computer Science, Carleton University, Ottawa, Canada \\
\email{\{anil,arash,sack\}@scs.carleton.ca}}
\maketitle       
\begin{abstract}

This paper presents an optimal $\Theta(n \log n)$ algorithm for determining  time-minimal rectiliear paths among $n$ transient rectilinear obstacles. An obstacle is transient if it exists in the scene only for a specific time interval, i.e., it appears and then disappears at specific times.
Given a point robot moving with bounded speed among transient rectilinear obstacles and a pair of points $s$, $d$, we determine a time-minimal, obstacle-avoiding path from $s$ to $d$. The main challenge in solving this problem arises as the robot may be required to wait for an obstacle to disappear, before it can continue moving toward the destination. 
Our algorithm builds on the continuous Dijkstra paradigm, which simulates propagating a wavefront from the source point. 
We also solve a query version of this problem. For this, we build a planar subdivision with respect to a fixed source point, so that  minimum arrival time to any query point can be reported in $O(\log n)$ time, using  point location for the query point in this subdivision.

\keywords{Shortest Path \and Transient Obstacles \and Time Minimal Path \and Time Discretization \and Continuous Dijkstra.}
\end{abstract}
\section{Introduction}

We study a variant of the classical shortest path problem in which each obstacle exists only during a specific time interval. Such obstacles are called \textit{transient obstacles} (see e.g., \cite{fuji0}). Besides solving an interesting problem in itself, our solutions may find applications in other motion planning problems in time-dependent environments. Transient obstacles can e.g., be used to approximate dynamic obstacles in the plane \cite{fuji-pixel, LaValle}.
In such settings, the trajectories of the moving obstacles are divided into a set of small pieces. Each piece is treated as a transient obstacle that exists in the scene only for the time interval in which the moving obstacle and the piece intersect. The approximation quality can be adjusted by varying the sizes of the pieces.
This adequately models real world scenarios in which  robots are limited by the sampling rate of their sensors  acquiring information and executing motion commands.

In general, our model considering transient obstacles can be useful for applications where one can define a discretized representation of time by a set of stages.
 For instance, in the area of \textit{path planning under uncertainty}, one considers the following problem:
 Let $\{R_1,..., R_n\}$ be a set of regions, where each region becomes contaminated at a random time. The probability at which $R_i$ is contaminated at time $t$, is given by a probability distribution $P_i(t)$. In such a setting, a natural approach is to search for a shortest path which is contamination-free with high probability. This is a class of motion planning referred to as \textit{hazardous region and shelter problems} \cite{lavalle_2006}. 
 A suitable means of planning a low contamination path, is to bound the probability at which the intersecting regions are contaminated. More precisely, for a small value of $\epsilon \in [0, 1]$, the robot cannot enter a region $R_i$ if $P_i(t) > \epsilon$. This can be viewed as a time discretization into a set of ``high risk'' time intervals for the regions. Using the corresponding probability distribution,
 we can determine a time interval $T_i$ (or in some cases more than one), which contains the contamination time with a probability of $1 - \epsilon$. This problem is easily transformed into our model where the confidence intervals are mapped into existence intervals for the transient obstacles.

\textbf{Related work.} The shortest path problem among transient obstacle was first studied by Fujimura \cite{fuji0}, who presented an $O(n^3 \log n)$ time algorithm for finding a  time-minimal path among transient (non-intersecting) polygonal obstacles. Later \cite{fuji1}, he proposed an $O(n^4)$ time algorithm for a variant of this problem in which the obstacles are allowed to occupy the same area of the plane (i.e., intersecting obstacles). A recently introduced model \cite{violation} considers another variation of this problem, where the path is allowed to pass through $k$ obstacles. They present an $O(k^2 n \log n)$ time algorithm, where $n$ is the total number of obstacle vertices. A more complex version of this problem has been studied in \cite{removeable}, in which the robot may pass through obstacles at some cost. They proved that this problem is NP-hard even if the obstacles are vertical line segments. 

\textbf{Our Contributions.} In this paper, we present an optimal $\Theta(n \log n)$ time algorithm for computing a time-minimal rectilinear path among rectilinear transient obstacles. Although our problem is a special case of the shortest path problem among transient obstacles, the methodology and the results of this work also have the potential to lead to an improvement of the existing $O(n^3 \log n)$ time algorithm for the general case. 
We first discuss a simple problem instance in which  the given obstacles are rectilinear segments. Then, we generalize the  algorithm developed for the simpler setting to simple rectilinear  polygons. Section \ref{prel} describes preliminaries, definitions and introduces some notation. Section \ref{range-search} presents several techniques that are subsequently employed in this  paper. 
Building on these techniques, Section \ref{alg} presents an $O(n^2 \log n)$ time algorithm for the problem, which is already an improvement over the existing algorithm applied to  our setting. Finally, Section \ref{eff-sec} details our optimal $\Theta(n \log n)$ time algorithm.  

\section{Preliminaries} \label{prel}

Let $E=\{E_1,...,E_n\}$ be a set of rectilinear transient edges, where each edge $E_i \in E$ exists in the scene during a time interval $[T_i^a, T_i^d]$, where $0 \leq T_i^a < T_i^d$. The edges are disjoint, i.e., no two edges are allowed to overlap at any time. We assume that the edges are in general position, which means that, no two edges lie on a common line.
Let $\mathcal{R}$ be a point robot having maximum speed $\vr$.
 For two given points $s$ and $d$ in the plane, our problem is to determine a time-minimal rectilinear path for the point robot from $s$ to $d$, denoted by  $\pi(s, d)$, which is collision free, i.e., the point robot does not pass through the edges during their existence intervals. W.l.o.g., we assume the robot always departs from $s$ at time $0$.

Our strategy is to employ the ``continuous Dijkstra'' paradigm \cite{suri-cd,mitchell-cd}, which has been applied to solve numerous shortest path problems among permanent (i.e., non-transient) obstacles \cite{suri-cd,mitchell-cd,Mi3,Mi2}. We provide here a brief description of this paradigm.
The continuous Dijkstra's technique models the effects of sweeping an advancing wavefront from the source point till it reaches the destination. A \textit{wavefront} (in $L_1$ metric space) is defined as the set of points on the plane at equal $L_1$ distance from the source. Initially, the wavefront is point located at $s$. After a short time period, it becomes a rhombus centered at $s$ with diameter $\epsilon$, where $\epsilon$ is a small positive constant (see Figure \ref{wavefront-fig} (a) for an illustration). The continuous Dijkstra's algorithm proceeds by expanding the rhombus outward from its center point. 
At any point in time, the wavefront consists of a set of line segments, known as \textit{wavelets}. A wavelet is defined as a maximal set of points on the wavefront, such that each point on the wavelet has a shortest path from $s$ via a common vertex. Each wavelet originates at a vertex, which is called the \textit{source} of the wavelet. Therefore, each wavelets moves in one of the four fixed directions: \textit{north-east, north-west, south-east, south-west}. We abbreviate these four directions as \{NE, NW, SE, SW\}. 
More precisely, the wavelets are in four fixed inclinations with respect to the $x$-axis with angles: $\pi / 4$, $3 \pi / 4$, $5\pi / 4$ and $7\pi/4$.

For our setting, we need to modify   the continuous Dijkstra's model described above for metric shortest paths, to time-minimal paths among transient obstacles. Note that, on a time-minimal path, the robot's speed alternates between $\vr$ (i.e., the robot is moving) and zero (i.e, the robot is waiting). It is easily seen that, by arriving earlier at some obstacles and then waiting there until the obstacle disappears, the robot can avoid any speed other than $\vr$ and zero. After waiting  the robot  continues to move towards the next destination. The points on the boundary of  obstacles where robots may be waiting are called \textit{wait points}. 
The behavior of the wavefront changes at  portions of  obstacles that are ``potential'' wait point candidates.
 
Given a point $p$ in the plane, we say $\pi(s, p)$ intersects an edge $E_i$, if it has a wait point on $E_i$; and we say it intersects a vertex $v$ if $v \in \pi(s, p)$.
Let $S(\pi(s, p))$ be the sequence of  edges and  vertices that $\pi(s, p)$ intersects. We formally define a wavelet as follows.
 
\begin{definition}\label{ws}
A \textbf{wavelet} $\wv$ is a maximal set of points, such that for each pair of points $p,q \in \wv$ there exist two paths $\pi(s, p)$ and $\pi(s, q)$ with equal arrival times, for which $S(\pi(s, p)) = S(\pi(s, q))$. Let $x$ be the last element in $S(\pi(s, p))$. We say $x$ is the \textbf{origin} of $\wv$ (or alternatively, we say that $\wv$ is \textbf{originating} from $x$). If $x$ is an edge,
 we say $\wv$ is a \textbf{segment wavelet}; otherwise, $\wv$ is a \textbf{point wavelet}. A \textbf{wavefront} is defined as the union of the wavelets at an equal time.
\end{definition}

By the above definition, similar to the original version of continuous Dijkstra, the point wavelets are in four fixed inclinations with respect to the $x$-axis with angles: $\{\pi / 4, 3 \pi / 4, 5\pi / 4, 7\pi/4\}$. Now, observe the following property of the time-minimal paths in our setting.

\begin{observation}\label{prepan-obs} {\normalfont \cite{fuji1}} 
When, after waiting on an edge $E_i$, the robot departs, at some time $T_i^{d}$, it will use  a move perpendicular to the orientation of $E_i$  (see Figure \ref{wavefront-fig} for an illustration).
\end{observation}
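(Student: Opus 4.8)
The plan is to argue by a local exchange (normalization) on a time-minimal path. The key structural fact is that a single move of a rectilinear path is axis-parallel and, since $E_i$ is axis-parallel too, the first move the robot makes after waiting on $E_i$ is either perpendicular or parallel to $E_i$; hence it suffices to show that the ``parallel'' alternative can always be eliminated without increasing the arrival time at $d$. The conceptual reason this should work is that a move parallel to $E_i$ does not enter the region vacated by $E_i$ at time $T_i^d$, so the wait on $E_i$ cannot have been forced by that move.

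Concretely, suppose that on a time-minimal path the robot reaches a wait point $p \in E_i$ at some time $t_a < T_i^d$, stays there until $T_i^d$, and then traverses a maximal sub-segment from $p$ to a point $q$ parallel to $E_i$ (after which, being a rectilinear turn, the robot moves perpendicular to $E_i$, or $q=d$). First I would show this parallel sub-segment is feasible even while $E_i$ is present: during $[t_a, T_i^d]$ the whole of $E_i$ is occupied by the obstacle, so by disjointness of the obstacles no other obstacle can overlap the portion of the segment $pq$ that lies on $E_i$, and the robot may therefore slide along $\partial E_i$. For the portion of $pq$ (if any) lying beyond an endpoint of $E_i$, I would reuse the fact that the original path already traverses exactly that portion, collision-free, after $T_i^d$; feasibility of the modified path is then preserved by relocating just the right amount of the original waiting time to immediately before that portion, so that every sub-segment is traversed no earlier than on the original path. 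The modified path reaches $d$ no later than the original, hence — by time-minimality — at the same time, and on it the robot no longer waits on $E_i$ until $T_i^d$ and then moves parallel to it. Applying this normalization to every such wait yields a time-minimal path with the stated property; the analogue for general polygonal transient obstacles is due to Fujimura~\cite{fuji1}.

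The main obstacle I anticipate is precisely this feasibility bookkeeping when the parallel move overruns an endpoint of $E_i$ and re-enters a region that other transient obstacles may occupy: one must relocate exactly enough waiting time and no more, and argue — using the general-position assumption, which forbids another obstacle collinear with $E_i$ — that the residual path stays collision-free and does not introduce a new, later wait that would change which obstacle the departure is ``charged'' to (so that the normalization terminates). A secondary point to pin down is that a parallel first move genuinely makes the wait on $E_i$ superfluous, i.e. the robot does not wait on $E_i$ merely because $p$ is a convenient resting spot while some other obstacle clears; this holds because such a wait would end at that other obstacle's disappearance time rather than at $T_i^d$, with an appropriate tie-break if disappearance times coincide.
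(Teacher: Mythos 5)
The paper does not actually prove this statement: it is imported as an observation credited to Fujimura \cite{fuji1} and is used only as a licence for propagating segment wavelets perpendicularly from an edge at its disappearance time. So there is no in-paper argument to compare against; what you supply is a self-contained replacement, and your reading of the claim is the right one. Since a time-minimal path need not be unique, the statement can only be a normalization (``some time-minimal path has this property''), and an exchange argument is the natural tool. Your two main cases are sound: while $E_i$ exists, disjointness of the obstacles protects a slide along $E_i$ (boundary contact must be legal, since waiting on $E_i$ is), and for the overhang beyond an endpoint of $E_i$ you correctly reschedule the saved waiting time so that that portion is not traversed in a way that could collide, while the arrival time at $d$ does not increase.

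Two loose ends, both of which you flag but do not close. First, termination of the normalization: this can be settled by noting that your exchange never changes the spatial route --- it only moves waiting time from the wait point $p$ to later points of the same route --- so the violation removed at $p$ can only be replaced by violations strictly later along the path. Moreover, any obstacle the slide meets strictly between $p$ and $q$ is perpendicular to the line of $E_i$ (general position forbids a collinear edge), so waits forced at such obstacles end with perpendicular departures and create no new violation; the only possible new violation is the coincidence at the turn point $q$ where the rescheduled departure time equals the disappearance time of an edge through $q$, and this lies strictly later along the route, so a single pass over the finitely many bends finishes. Second, your bookkeeping phrase ``traversed no earlier than on the original path'' is slightly off target: what you need is that the modified robot is, at every moment, at least as far along the fixed route as the original robot, waiting in place whenever it is blocked; since the original schedule traverses every point of the route collision-free, the modified robot can always fall back to that schedule and hence arrives no later. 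With these two repairs your argument is a correct, elementary proof of the fact the paper merely cites.
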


\begin{figure}[H]
\captionsetup[subfigure]{justification=centering}
\centering
 \begin{subfigure}{.4\linewidth}
 \centering
\includegraphics[scale=0.8]{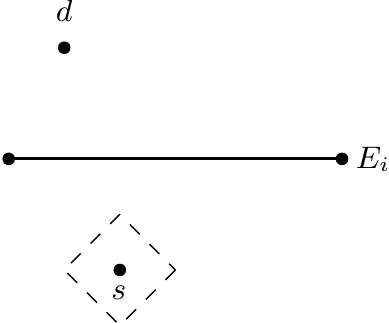}
  \caption{}
 \end{subfigure}
 \begin{subfigure}{.45\linewidth}\centering
\includegraphics[scale=0.8]{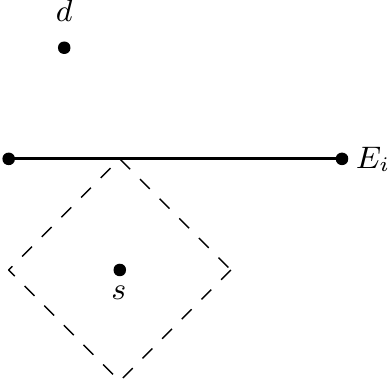}
  \caption{}
 \end{subfigure}\par\bigskip\par
 \begin{subfigure}{.45\linewidth}
 \centering
\includegraphics[scale=0.8]{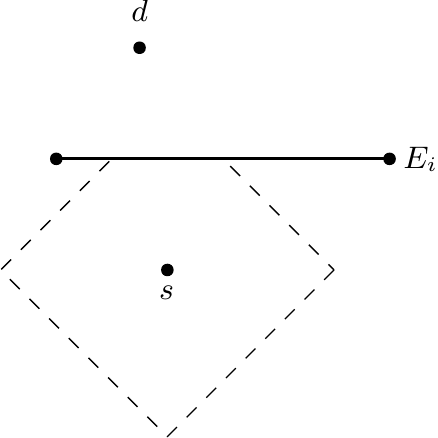}
  \caption{}
 \end{subfigure}
 \begin{subfigure}{.45\linewidth}\centering
\includegraphics[scale=0.8]{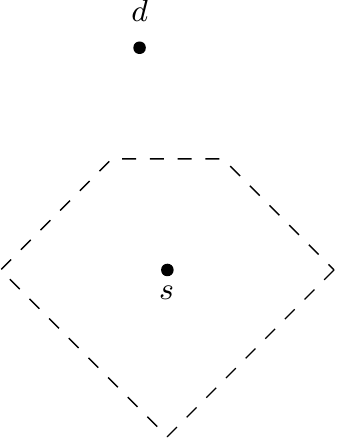}
  \caption{}
 \end{subfigure}\par\bigskip\par
 \begin{subfigure}{.45\linewidth}\centering
\includegraphics[scale=0.5]{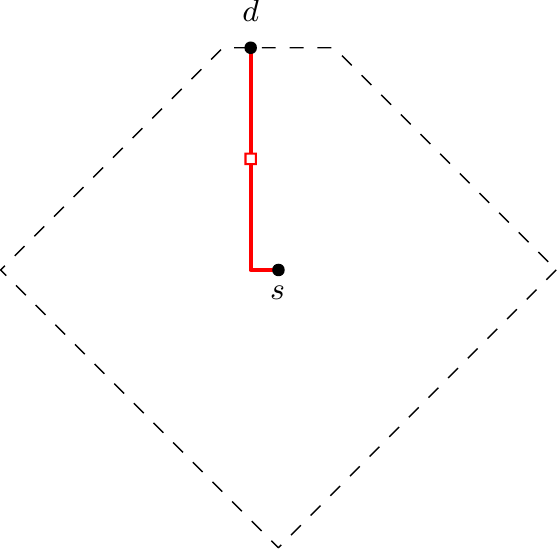}
  \caption{}
 \end{subfigure}
\caption{Initially, the wavefront is point located at $s$. (a) After a short time period, it becomes a rhombus centered at $s$. (b) The wavefront hits the body of $E_i$ during its existence time interval. (c) The north accessible segment of  the source point $(s, 0)$ is located on $E_i$. (d) At the
depicted instance, $E_i$ disappears. (e) The wavefront continues expanding until it hits $d$. A time-minimal path from $s$ to $d$ is represented by red. The red square represents the wait point of this path.} \label{wavefront-fig}
\end{figure}

By the above observation, each segment wavelet  propagates outwards perpendicularly to its orginating segment. Since the edges are axis-parallel, each segment wavelet is oriented in one of four (axis-parallel) directions: $\{0, \pi / 2, \pi, 3\pi/4\}$ and moves in one of four directions: \textit{north, south, east or west}; these are abbreviated as:
\{N, S, E, W\}, respectively. 


Our algorithm  propagates  a wavelet $\wv$, with  inclination $\theta$, outwards by using  a sweep line through $\wv$     (refer to Section \ref{tcp} for details). When $\wv$ encounters (or ``hits'') an obstacle, we add new wavelets originating from a vertex or an edge of the obstacle. For each wavelet $\wv$, we designate an area called \textit{search region}, from which $\wv$ propagates its interior (refer to Section \ref{seg} for a formal definition). The time at which  propagation starts is called  \textit{departure time}. 
A key property that we will be subsequently using is that wavelets are line segments with fixed inclinations. 
This enables us to efficiently find the next propagation``events''   using range searching queries (see Section \ref{range-search}) (events are intersections between the wavelets and the obstacles). 

We say a path is \textit{monotone} if any axis-parallel line intersects the path in at most one connected set. For any pair of consecutive vertices $u$ and $v$ on a shortest path among non-transient obstacles, in \cite{origi-l1}, it is proven that the sub-path from $u$ to $v$ is monotone. In the following lemma, we show that the analogous property also holds for time-minimal rectilinear paths among transient obstacles.

\begin{lemma}\label{monotone} {\normalfont \bf (Monotonicity Property).}
In our model, let $u$ and $v$ be two consecutive vertices on $\pi(s, d)$.
The sub-path of $\pi(s, d)$ from $u$ to $v$, denoted by $\pi(u, v)$, is monotone.
\end{lemma}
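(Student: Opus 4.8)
The plan is to argue by contradiction: suppose $\pi(u,v)$ is not monotone, so some axis-parallel line $\ell$ crosses it in more than one connected component. Without loss of generality assume $\ell$ is vertical, so $\pi(u,v)$ makes at least one ``eastward then westward'' reversal (or the reverse) in the $x$-coordinate. Because the path is rectilinear, such a reversal means there are two vertical segments of $\pi(u,v)$, say with the path going east into the first and west out of the second, and between them the path reaches a locally extreme $x$-value. I would isolate the maximal sub-path $\pi(a,b)$ of $\pi(u,v)$ between the last point $a$ before the first reversal and the first point $b$ after the second reversal that have the same $x$-coordinate (such $a,b$ exist by continuity of the $x$-coordinate along the path, since the path overshoots and comes back). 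The key structural claim is that this ``detour'' $\pi(a,b)$ can be replaced by the straight vertical segment $\overline{ab}$, which is shorter in $L_1$ length, and that doing so does not create any new collision and does not increase the arrival time at $v$ — contradicting time-minimality of $\pi(s,d)$.

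The two things that need to be checked for the replacement are (i) feasibility, i.e.\ $\overline{ab}$ is collision-free at the times the robot traverses it, and (ii) that the schedule can be re-timed so the arrival time at $b$ (hence at $v$, hence at $d$) does not increase. For (ii), note that $\overline{ab}$ has $L_1$-length at most that of $\pi(a,b)$ (strictly less, in fact, since the path strictly overshoots in $x$), so even without exploiting waiting the robot can traverse $\overline{ab}$ in no more time than it took along $\pi(a,b)$; and since $u$ and $v$ are \emph{consecutive} vertices on $\pi(s,d)$, the sub-path $\pi(u,v)$ — and in particular $\pi(a,b)$ — contains no obstacle vertices in its interior, so the robot does not \emph{wait} anywhere strictly inside $\pi(a,b)$ (wait points lie on obstacle boundaries and the relevant waiting is captured at the endpoint vertices by Observation~\ref{prepan-obs}). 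Hence the time spent on $\pi(a,b)$ is exactly its length over $\vr$, and we can reschedule to arrive at $b$ no later, leaving the remainder of the path (with its waits) intact or shifted earlier.

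For (i), the collision-freeness of the shortcut, the argument is that $\overline{ab}$ lies inside the region ``swept over'' by $\pi(a,b)$ in the following sense: since $\pi(a,b)$ goes from $a$ eastward, turns around, and comes back west to $b$ with $x(a)=x(b)$, the vertical segment $\overline{ab}$ is contained in the closed region bounded by $\pi(a,b)$ together with $\overline{ab}$ itself. Any transient edge that $\overline{ab}$ meets at a bad time would, because edges are axis-parallel and disjoint and in general position, have to be crossed by $\pi(a,b)$ as well (it separates $a$ from $b$ within that region), so $\pi(a,b)$ would itself be infeasible at the corresponding time — but $\pi(s,d)$ is feasible. More carefully, one picks the re-timed traversal of $\overline{ab}$ to start no earlier than the original departure from $a$ and, using the slack created by the shorter length, to pass any given horizontal level within the same time window during which $\pi(a,b)$ occupied that level; a transient obstacle blocking the shortcut at that level and time would have blocked the original path. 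The main obstacle in the write-up is making this ``swept region'' / separation argument precise and handling the timing so that the shortcut is shown collision-free \emph{simultaneously} with being no slower — i.e.\ choosing the velocity profile on $\overline{ab}$ correctly; once that is set up, the contradiction with the optimality (and, if one wants the path to actually be non-self-intersecting, the minimality of turn count) of $\pi(s,d)$ is immediate. I would also remark that the same argument with $\ell$ horizontal handles the other coordinate, giving full monotonicity.
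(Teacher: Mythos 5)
Your shortcut/exchange argument has two genuine gaps. First, the premise that ``since $u$ and $v$ are consecutive vertices, the robot does not wait anywhere strictly inside $\pi(u,v)$'' is false in this model: wait points lie on the \emph{interiors} of transient edges, not at vertices (the robot arrives on the body of some $E_i$, waits until $T_i^d$, and leaves perpendicularly, cf.\ Observation~\ref{prepan-obs}). A sub-path between two consecutive path vertices can contain an arbitrary sequence of such wait points, and this is exactly the case the paper must (and does) treat separately; with waits inside $\pi(a,b)$ your re-timing claim ``time spent on $\pi(a,b)$ equals its $L_1$ length over $\vr$'' breaks down, and with it the assertion that the straight segment $\overline{ab}$ can be traversed no later. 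Second, the feasibility step (i) is not established: a transient edge can cross $\overline{ab}$ during the traversal window while lying strictly inside the region bounded by $\pi(a,b)\cup\overline{ab}$, so it need not intersect the original detour at all (a segment inside that region does not separate $a$ from $b$); and even when it does intersect the trace of $\pi(a,b)$, the original path may pass that location at a time when the edge does not exist. Your ``same time window per horizontal level'' device does not help, because the blocking edge at $(x(a),y)$ can be far from where $\pi(a,b)$ crosses level $y$. You flag this as ``the main obstacle,'' but it is precisely the part that an exchange argument among \emph{transient} obstacles cannot get for free, and waiting out the blocking edge on the shortcut would destroy the time bound.

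The paper's proof avoids both problems by not re-proving monotonicity from scratch: if $\pi(u,v)$ has no wait points, it simply invokes the known monotonicity of rectilinear shortest paths among permanent obstacles \cite{origi-l1}; if it has wait points $p_1,\dots,p_h$ (ordered by departure time), it applies that result to the wait-free prefix $\pi(u,p_1)$ and uses Observation~\ref{prepan-obs} to conclude that all wait points lie on one common vertical or horizontal line, so $\pi(p_1,v)$ is a single axis-parallel segment and the concatenation is monotone. If you want to salvage your route, you would need (a) to handle wait points inside the detour explicitly, and (b) a temporal feasibility argument for the shortcut that does not assume the blocking obstacle must have blocked the original path --- at which point reducing to the permanent-obstacle result plus the perpendicular-departure observation, as the paper does, is the shorter path.
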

\begin{proof}
If $\pi(u, v)$ contains no wait points, then,
with the same proof as in \cite{origi-l1}, $\pi(u, v)$ 
is a monotone path from $u$ to $v$. Furthermore, if $\pi(u, v)$ contains a set of wait points $W=\{p_1, ..., p_h\}$ sorted by their departure times, with the same proof as in \cite{origi-l1}, $\pi(u, p_1)$ is monotone. By Observation \ref{prepan-obs}, every wait point in $W$ is located on a same vertical or horizontal line. So, $\pi(p_1, v)$ is monotone and consequently, $\pi(u, v)$ is monotone.
\end{proof}

Define a pair $(p, t)$,  as \textit{point source}, denoted by $\qp(p, t)$, where $p{=}(X_p, Y_p)$ is the $x$-$y$ location of the robot at time $t$. We will simply say a path from $\qp$ instead of a path leaving $p$ at time $t$.
Suppose the robot departs  from $\qp$ by  moving north at maximum speed. For that motion, we define the \textit{north stop point} for $\qp$, denoted by $\spp(\qp, N)$, as the first point on any obstacle that the robot ``hits'' during the respective obstacles' existence times. Analogously, we define $\spp(\qp, S)$, $\spp(\qp, E)$ and $\spp(\qp, W)$ as the south, east and west stop points, respectively. Note that the locations of the stop points may change depending on the departure time $t$. 

 A point $q \in E_i$ is called \textit{accessible} from point source $\qp(p, t)$, if there exists a time-minimal path from $\qp$ to $q$, denoted by $\pi(\qp, q)$, such that: (1) $\pi(\qp, q)$ contains no wait points and (2) the robot arrives at $q$ during the existence time interval of $E_i$. We denote by $T(\qp, q) = t + {\norm{pq} \over \vr}$ the arrival time of this path, where $\norm{pq}$ is the $L_1$ distance between the two points. Observe that, any stop point for $\qp$ is an accessible point from $\qp$. 
 Let $\spp(\qp, N) \in E_i$ be the north stop point for $\qp$. 
 We define the \textit{north accessible segment} of $\qp$ as a maximal set of accessible points on $E_i$. Note that, $\spp(\qp, N)$ is a point on the north accessible segment. Analogously, we define the other accessible segments of $\qp$ in the three other directions.

Given a sub-segment $e$=$((X_1, Y_1),(X_2, Y_2))$ of an edge $E_i$, we define a \textit{segment source} $\qs(e, T_i^d)$ 
${=}\cup_{p\in e}\{\qp(p, T_i^d)\}$ as a maximal set of point sources located on $e$ having common departure time  $T_i^d)\}$.

A point $q \in E_i$ is a \textit{north accessible point} for $\qs$, if there is exist $\qp(p, T_i^d) \in \qs(e, T_i^d)$ such that $q{=}\spp(\qp, N)$. 
We define the \textit{north stop segment} for $\qs$, denoted by $\spp(\qs, N)$, as the maximal set of north accessible points which have minimum distance to $e$. By the general position assumption, $\spp(\qs, N)$ is a connected sub-segment of an edge in $E$. Intuitively, if we drag the segment $e$ north, the north stop segment is the first intersection between the dragging segment and the obstacles.
Analogously, we define $\spp(\qs, S)$, $\spp(\qs, E)$ and $\spp(\qs, W)$.

\section{Range Searching Techniques} \label{range-search}
When propagating a wavelet, we wish to quickly determine the next event, where the wavelet intersects an obstacle. In this section, we present our techniques employed to solve this problem. First, we present a solution to the problem of determining the stop points for a query point source (see Lemma \ref{spqp}). Then, we  devise an algorithm to report the stop segments for a query segment source (see Lemma \ref{stop-point-query-segment}). Using these stop points, we define a rectangular range  (the search region), which  contains potential next  points/edges hit  by the wavelet. We identify these  using a range searching technique, presented in Lemma \ref{queryR}.

\subsection{Finding the Stop Points } \label{stop}
 
Given a query point source $\qp(p{=}(x,y), t)$, we denote by $\{\spp(\qp, N)$ $, \spp(\qp, S)$ $, \spp(\qp, E),$ $\spp(\qp, W)\}$ the four stop points for $\qp$. In this section, we present a solution to find $\spp(\qp, N)$ and other stop points can be found analogously. 
Let $\spp(\qp, N) \in E_i$, where $E_i=((X_1,Y),(X_2,Y))$. By definition, the robot hits $E_i$ during its existence time interval. So, we must have $T_i^a \leq t + {(Y-y) \over \vr} \leq T_i^d$. Hence, 
\begin{align} \label{y-ax}
    T_i^a - {Y \over \vr} \leq & t - {y \over \vr} \leq T_i^d - {Y \over \vr}.
\end{align}
Also, it is easily seen that, 
\begin{align} \label{x-ax}
    X_1 \leq x \leq X_2.
\end{align} 
As a result, if the north stop point for $\qp$ is located on $E_i$, Equations (\ref{y-ax}) and (\ref{x-ax}) must be satisfied. 
 These equations can be viewed as one equation in two dimensions, where the $y$ values are replaced by $(t - {y \over \vr})$: let $\range^s_i$ be a rectangle where $(X_1, T_i^a - {Y \over \vr})$ and $(X_2, T_i^d - {Y \over \vr})$ is  one of its opposite corner pairs. Given a point $\overline{p}=(x, t - {y \over \vr})$, observe that $\overline{p} \in \range^s_i$ if and only if $\qp$ satisfies the Equations (\ref{y-ax}) and (\ref{x-ax}). We call $\range^s_i$ the \textit{south shadow range} of $E_i$ and $\overline{p}$ the \textit{south shadow point} of $\qp$ (see Figure \ref{shadow} for an example).

\begin{figure}[t]
\captionsetup[subfigure]{justification=centering}
\centering
 \begin{subfigure}{.47\linewidth}
 \centering
\includegraphics[scale=0.4]{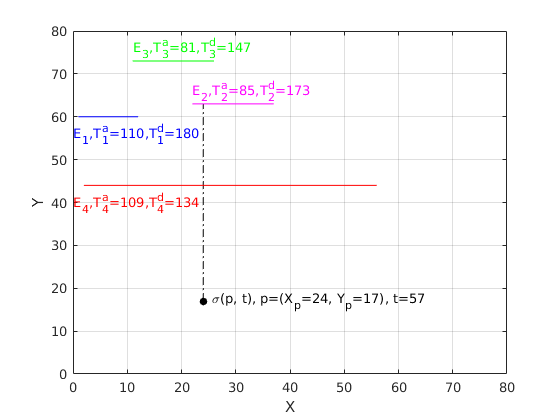}
  \caption{}
 \end{subfigure}
 \begin{subfigure}{.47\linewidth}\centering
\includegraphics[scale=0.4]{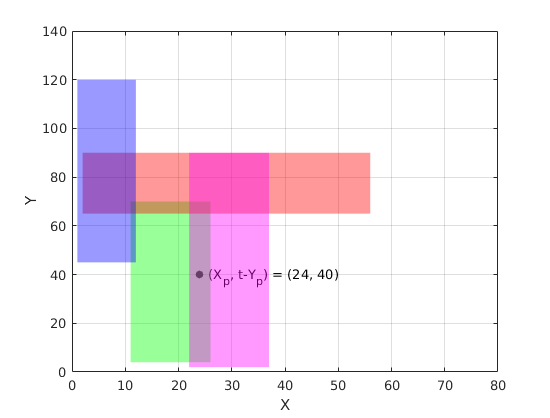} 
  \caption{}
 \end{subfigure}
\caption{(a) The north stop point for $\qp$ is located on $E_2$. For simplicity, we assume $\vr = 1$. Figure (b) illustrates the shadow ranges corresponding to the segment obstacles in Figure (a). Since the purple rectangle is the minimum weight shadow range that contains $\overline{p}$, the robot hits $E_2$ first.} \label{shadow}
\end{figure}

\begin{observation}\label{shadow-obs}
Let $\qp(
p, t)$ be a query point source and $\spp(\qp, N) \in E_i$ be its north stop point. Then, the south shadow point for $\qp$ is located inside the south shadow range of $E_i$, i.e., $\overline{p} \in \range^s_i$.
\end{observation}

Note that the reverse direction of the above observation does not always hold. In other words, there are several edges whose shadow ranges contain $\overline{p}$; however, only one includes the north stop point. Recall that a stop point represents the ``first'' intersection between the robot and the obstacles. So, $\spp(\qp, N)$ is located on an obstacle whose $Y$ value is minimum among all edges whose south shadow range contain $\overline{p}$ (see Figure \ref{shadow} (b)). Thus, we assign a weight to a shadow range $\range^s_i$, denoted by $\wv(\range^s_i)$, which is the $Y$ value of its corresponding (horizontal) edge $E_i$. In order to find the north stop point for the query point source $\qp$, we need to find the minimum weight shadow range $\range$ that contains $\overline{p}$.

\begin{lemma}\label{stb} 
 {\normalfont \cite{Agarwal-stab}} A set
$H$ of $n$ axis-parallel rectangles, where each rectangle $h \in H$ has a weight $\wv(h) \in \mathbb{R}$, can be maintained so that the minimum weight rectangle containing a query point can be
determined in $O(\log n)$ query time, after $O(n \log n)$ preprocessing time.
\end{lemma}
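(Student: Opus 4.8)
This is a weighted \emph{point-enclosure} query (often called \emph{rectangle stabbing}): among the rectangles of $H$ that contain the query point, report one of minimum weight. The plan is to use a two-level orthogonal search structure. On the first level I would build a segment tree $T$ on the $x$-extents of the rectangles; on the second level each node of $T$ is equipped with a one-dimensional structure for the \emph{weighted interval stabbing minimum} problem on $y$-extents; and the two levels are then glued together by fractional cascading, so that a query costs $O(\log n)$ rather than $O(\log^2 n)$.

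\emph{The one-dimensional building block.} Given $m$ weighted intervals on a line, I want a structure built in $O(m\log m)$ time and $O(m)$ space that finds the minimum-weight interval containing a query value in $O(\log m)$ time. Sort the $2m$ endpoints; they cut the line into $O(m)$ elementary slabs, and a query is answered by the precomputed minimum weight of the slab containing the query value, located by binary search. To precompute those slab minima, process the intervals in increasing order of weight and ``paint'' each interval's slab range, using an interval-union / union-find structure so that slabs already assigned a lighter weight are skipped; given the intervals pre-sorted by weight this costs $O(m)$ (up to an inverse-Ackermann factor).

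\emph{The two-dimensional assembly.} Insert each rectangle $[x_1,x_2]\times[y_1,y_2]$ of weight $w$ into the $O(\log n)$ canonical nodes of $T$ whose $x$-range lies inside $[x_1,x_2]$, recording the pair $([y_1,y_2],w)$ there, and then equip every node $v$ of $T$ with the one-dimensional structure above on the weighted $y$-intervals stored at $v$. Since each rectangle reaches $O(\log n)$ canonical nodes, the secondary structures have total size $O(n\log n)$, and --- building them from one global sort of the endpoints and one global sort by weight --- the whole preprocessing is $O(n\log n)$. To answer a query $(q_x,q_y)$, observe that a rectangle contains the point iff it is stored at some node on the root-to-leaf path of $q_x$ whose $y$-interval contains $q_y$; so we descend that path, query each node's secondary structure with $q_y$, and return the smallest weight encountered. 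Naively this is $O(\log n)$ nodes each costing $O(\log n)$, i.e.\ $O(\log^2 n)$.

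\emph{Removing the extra logarithm.} The improvement to $O(\log n)$ comes from the fact that every secondary query along the path is a binary search for the \emph{same} key $q_y$ in a sorted array (of slab boundaries), and that $T$ has bounded degree; hence fractional cascading applies. Adding bridges between each node of $T$ and its two children lets us locate $q_y$ in the root's array in $O(\log n)$ time and then in every subsequent node's array in $O(1)$ amortized time, each location followed by an $O(1)$ slab-minimum lookup; this augmentation adds only $O(n\log n)$ to space and to preprocessing. I expect this fractional-cascading step to be the main obstacle --- one has to check that the secondary structures are exactly of the form fractional cascading requires (sorted lists searched with a common key) and to maintain the bridge invariants down the tree --- whereas the rest is a routine composition of classical orthogonal range-searching machinery. (An alternative proof with the same bounds sweeps a vertical line in $x$, maintaining the one-dimensional stabbing-minimum structure of the currently active rectangles under insertions and deletions, and makes it partially persistent; a query is then a binary search for the correct version followed by one one-dimensional query.)
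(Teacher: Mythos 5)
There is nothing in the paper to compare your argument against: Lemma~\ref{stb} is not proved there at all, it is imported verbatim from the cited reference \cite{Agarwal-stab}, and the paper only uses its interface (the $O(n\log n)$ preprocessing / $O(\log n)$ query bounds) as a black box in Lemma~\ref{spqp}. Your construction is therefore a self-contained substitute rather than a variant of an internal proof, and it is essentially correct: the segment tree on $x$-extents reduces the problem to weighted interval stabbing-minimum at the $O(\log n)$ nodes on the search path of $q_x$, the per-node slab decomposition with precomputed slab minima answers each such 1-D query by a single predecessor search, and since all these searches use the same key $q_y$ in sorted catalogs attached to a bounded-degree tree, fractional cascading legitimately removes the second logarithm; the catalogs have total size $O(n\log n)$, so the cascading overhead stays within the claimed preprocessing bound. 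The one point you should tighten is the cost of the ``painting'' phase: with a generic union-find it is $O(n\log n\,\alpha(n))$ over all nodes, which strictly exceeds the stated $O(n\log n)$. This is easily repaired --- the unions are performed on a static linear order of elementary slabs, so the linear-time special-case union-find of Gabow and Tarjan applies, or one can compute the slab minima by a left-to-right sweep per node with a lazy-deletion heap if one is willing to argue the bound differently --- but as written the parenthetical ``up to an inverse-Ackermann factor'' concedes slightly more than the lemma allows. Your alternative sketch (sweep in $x$ with a partially persistent dynamic 1-D stabbing-minimum structure) also works but quietly relies on a nontrivial dynamic 1-D structure with $O(\log n)$ update time, so the two-level static construction is the cleaner route; the published proof in \cite{Agarwal-stab} is organized differently, but for the purposes of this paper any construction meeting the stated bounds suffices.
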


The following is the direct consequence of the above lemma.

\begin{lemma}\label{spqp} 
After $O(n \log n)$ time preprocessing, all stop points of a query point source can be found in $O(\log n)$ time.
\end{lemma}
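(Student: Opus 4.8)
The plan is to recognize Lemma \ref{spqp} as essentially a corollary of Lemma \ref{stb}: each of the four stop points is obtained by a single minimum-weight rectangle-stabbing query against a precomputed set of shadow ranges, the shadow-point / shadow-range correspondence established above being exactly the reduction that makes this work. Concretely, for the north stop point I would preprocess as follows: for every horizontal edge $E_i=((X_1,Y),(X_2,Y))\in E$, form its south shadow range $\range^s_i$ with opposite corners $(X_1,\,T_i^a-Y/\vr)$ and $(X_2,\,T_i^d-Y/\vr)$, give it weight $\wv(\range^s_i)=Y$, and feed these $n$ weighted axis-parallel rectangles into the structure of Lemma \ref{stb}, at cost $O(n\log n)$. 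A query then computes the south shadow point $\overline{p}=(x,\,t-y/\vr)$, retrieves in $O(\log n)$ time the minimum-weight rectangle containing it, and, if that rectangle is $\range^s_i$, reports $\spp(\qp,N)=(x,Y)\in E_i$; if no rectangle contains $\overline{p}$, then by Observation \ref{shadow-obs} the northward ray meets no obstacle during its lifetime and $\spp(\qp,N)$ is undefined.

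The work is in arguing correctness, i.e.\ that the answer of the stabbing query is exactly the edge carrying the north stop point. By Observation \ref{shadow-obs} that edge has $\overline{p}$ inside its shadow range, so it is among the stabbed rectangles; and since a stop point is by definition the \emph{first} obstacle met along the northward ray, while the robot can only meet edges lying above $p$, the sought edge is the one of least $Y$-coordinate among the stabbed edges that lie above $p$. The step I expect to be the main technical point is showing that this coincides with the overall minimum-weight stabbed rectangle: one must rule out that some edge the robot never meets (an edge at or below $p$) both has its shadow range stabbed by $\overline{p}$ and carries a strictly smaller weight, which would undercut the minimality of the correct answer. This is the only place the geometry, rather than the black-box data structure, enters.

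The remaining three stop points are handled by the symmetric coordinate transformations, so no new ideas are needed. For the south direction one uses the transformed point $(x,\,t+y/\vr)$, rectangles with corners $(X_1,T_i^a+Y/\vr)$ and $(X_2,T_i^d+Y/\vr)$, and negated weight $-Y$, so that the minimum-weight query selects the topmost edge below $p$; for the east and west directions one exchanges the roles of the two axes, working with vertical edges $E_i=((X,Y_1),(X,Y_2))$, transformed points $(t-x/\vr,\,y)$ and $(t+x/\vr,\,y)$ respectively, rectangles spanning $[Y_1,Y_2]$ in the second coordinate, and weights $X$ and $-X$ respectively. Each of the four structures is built once in $O(n\log n)$ time and answers its query in $O(\log n)$ time, so after $O(n\log n)$ preprocessing all four stop points of a query point source are reported in $O(\log n)$ total time, which is the claim.
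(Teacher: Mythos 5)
Your reduction is the paper's own: Lemma \ref{spqp} is obtained there exactly as you propose, by building the weighted south shadow ranges, querying the minimum-weight stabbing structure of Lemma \ref{stb} with the shadow point $\overline{p}$, and handling the remaining three directions by the symmetric transformations; the paper states the lemma as a ``direct consequence'' of Lemma \ref{stb} after this reduction.

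However, the one step you defer as ``the main technical point'' is a genuine gap, and it cannot be closed in the form you state it. Containment $\overline{p}\in\range^s_i$ only encodes $T_i^a \le t + (Y-y)/\vr \le T_i^d$ together with $X_1\le x\le X_2$; when $Y<y$ this merely says that the lifespan of $E_i$ contains the \emph{past} time $t-(y-Y)/\vr$, which is easy to arrange. For instance, with $\vr=1$, let $E_i$ lie at height $Y=0$, span $x\in[0,10]$, and exist during $[0,100]$; for the query $p=(5,5)$, $t=10$ we get $\overline{p}=(5,5)\in\range^s_i$, even though a robot moving north from $p$ never meets $E_i$. Such an edge has weight $Y<y$, strictly smaller than the weight of the true north stop edge, so the plain minimum-weight query of Lemma \ref{stb} can return an edge the robot never hits: the ruling-out you hope for is false in general. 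To repair the argument one must impose the constraint $Y\ge y$ on the candidates, e.g.\ by asking for the minimum weight at least $y$ among the rectangles stabbed by $\overline{p}$ (a threshold variant not provided by Lemma \ref{stb} as quoted) or by otherwise filtering out edges below $p$; with that restriction the answer is indeed the first edge hit, since containment plus $Y\ge y$ is exactly ``the robot reaches $E_i$ during its lifetime,'' and the first such edge is the one of minimum $Y$. You have in fact put your finger on precisely the point that the paper itself glosses over, but as written your proof leaves that point as an unproven claim which, without such a modification of the query, does not hold.
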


\subsection{Finding the Stop Segments} \label{stopseg}

Let $\qs(
e,t)$ be a horizontal query segment source and $\spp (\qs, N)$ be its north stop segment, 
 where $e=((x_1, y),(x_2, y))$. Define $\overline{e}=((x_1, t - {y \over \vr}),(x_2, t - {y \over \vr}))$ as the \textit{shadow segment} of $e$. 
 By Equations (\ref{y-ax}) and (\ref{x-ax}) and Observation \ref{shadow-obs}, the following can be observed.

\begin{observation}\label{shadow-obs2}
Let $\qs(
e,t)$ be a horizontal segment source and $\spp (\qs, N) \in E_i$ be its north stop segment. Segment $\overline{e}$ intersects the south shadow range of $E_i$.
\end{observation}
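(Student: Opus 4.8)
The plan is to derive the claim directly from Observation~\ref{shadow-obs}, applied to a suitably chosen point source lying on $e$. First I would unfold the definition of the north stop segment: since $\spp(\qs, N) \subseteq E_i$, pick any point $q \in \spp(\qs, N)$. By definition $q$ is a north accessible point of $\qs(e,t)$, so there is a point source $\qp(p,t) \in \qs(e,t)$ with $q = \spp(\qp, N)$. In particular $p \in e$, hence $p = (x_p, y)$ for some $x_1 \le x_p \le x_2$, and the north stop point of $\qp$ lies on $E_i$.

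Next I would invoke Observation~\ref{shadow-obs} for this point source $\qp(p,t)$: because $\spp(\qp, N) \in E_i$, its south shadow point $\overline{p} = (x_p,\, t - y/\vr)$ lies in the south shadow range $\range^s_i$ of $E_i$.

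Finally, I would compare $\overline{p}$ with the shadow segment $\overline{e} = ((x_1, t - y/\vr),(x_2, t - y/\vr))$. Since the second coordinate of $\overline{p}$ equals $t - y/\vr$ and $x_1 \le x_p \le x_2$, the point $\overline{p}$ lies on $\overline{e}$. Therefore $\overline{p} \in \overline{e} \cap \range^s_i$, which shows that $\overline{e}$ intersects the south shadow range of $E_i$, as required.

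This argument is essentially definitional bookkeeping, so I do not expect a genuine obstacle; the one place that needs attention is confirming that the witness point source furnished by the definition of ``north accessible point'' actually has its north \emph{stop} point — not merely some north accessible point — on $E_i$, but that is exactly the hypothesis that $\spp(\qs, N)$ lies on $E_i$, so Observation~\ref{shadow-obs} applies verbatim. (The analogous statements for $\spp(\qs, S)$, $\spp(\qs, E)$, $\spp(\qs, W)$ follow by symmetry, replacing the south shadow range by the appropriate shadow range.)
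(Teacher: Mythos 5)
Your argument is correct and matches the paper's justification: the paper states this observation as an immediate consequence of Equations~(\ref{y-ax}), (\ref{x-ax}) and Observation~\ref{shadow-obs}, and your proof is precisely that reasoning made explicit, by taking a witness point source on $e$ whose north stop point lies on $E_i$ and noting its shadow point lies on $\overline{e}$ and in $\range^s_i$.
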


 Recall that a stop segment is defined as the first intersection between the dragging segment $e$ and the obstacles. Thus, $\spp(\qs, N)$ is located on a segment $E_i$ whose south shadow range's interior intersects $\overline{e}$ and has minimum weight (i.e., $Y$ value). We now consider two cases: firstly, let $\overline{e}$ be located entirely inside the south shadow range of $E_i$. By definition, it is easily observed that the north stop point for any source point on $\qs(
e,t)$, is located on $E_i$. Thus, by locating a point on $\overline{e}$ we can find the stop segment for $\qs(
e,t)$ (see Lemma \ref{stb}). 

 For the second case, assume $\overline{e}$ intersects the boundary of $\range_i^s$. Let $R$ be the set of shadow ranges whose boundaries (horizontal and vertical line segments) intersect $\overline{e}$. 
By the following lemma, we can report the minimum weighted range $\range_i \in R$ whose boundary intersects the source segment $\overline{e}$.

\begin{lemma}\label{lemintersection} {\normalfont \cite{rectintersection}}
Given a family of $n$ rectilinear line segments $L$ and a query rectilinear line segment $s$,  $L$ can be preprocessed in $O(n \log n)$ time, so that a minimum weight segment in $L$ intersecting $s$, can be reported in $O(\log n)$ query time.
\end{lemma}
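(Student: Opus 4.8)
Since the claim is invariant under a $\pi/2$ rotation, I would assume the query segment is horizontal, $s=((x_1,y),(x_2,y))$ with $x_1\le x_2$, and split $L$ into its set $L_h$ of horizontal segments and its set $L_v$ of vertical segments, solving the query on each part and returning the lighter candidate. The part $L_h$ is easy: two horizontal segments intersect iff they are collinear and their $x$-projections overlap, so after locating the (by the general-position hypothesis on $L$, at most one) segment of $L_h$ lying on the line of $s$, one only needs the minimum-weight segment among those whose $x$-interval meets $[x_1,x_2]$, i.e. a single one-dimensional minimum-weight interval-overlap query, which a standard weighted interval structure answers in $O(\log n)$ time after $O(n\log n)$ preprocessing. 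The interesting part is $L_v$.

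For $L_v$ the plan is a two-level structure. A vertical segment $v=((a,b_1),(a,b_2))$, $b_1\le b_2$, meets $s$ exactly when $b_1\le y\le b_2$ and $x_1\le a\le x_2$, i.e. when the horizontal line of $s$ stabs the $y$-interval of $v$ and the abscissa of $v$ lies in $[x_1,x_2]$. I would build a segment tree $\mathcal{T}$ over the $y$-intervals $\{[b_1,b_2]\}$, so that each $v$ lies in the canonical sets of $O(\log n)$ nodes and the value $y$ determines a single root-to-leaf path of $\mathcal{T}$ whose nodes' canonical sets contain precisely the segments $s$ can meet. At each node $u$ I would keep the abscissae of its canonical set in sorted order, together with a range-minimum (Cartesian-tree/RMQ) structure on the corresponding weights; then ``minimum weight among canonical segments of $u$ with abscissa in $[x_1,x_2]$'' costs two predecessor searches plus one RMQ. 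A straight traversal would cost $O(\log^2 n)$, so to reach $O(\log n)$ I would fractionally cascade the sorted abscissa lists down $\mathcal{T}$: one $O(\log n)$ search at the root positions $x_1$ and $x_2$ in every node of the path by $O(1)$-per-node pointer hops, after which each node contributes only an $O(1)$ RMQ. Since the $n$ segments occupy $O(n\log n)$ total canonical-set slots, sorting them, building the RMQ structures, and installing the cascade all take $O(n\log n)$ time and space, while each query is $O(\log n)$.

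I expect the technical crux to be this last step: arranging fractional cascading on $\mathcal{T}$, whose secondary lists have total size $\Theta(n\log n)$ rather than $\Theta(n)$, so that the $O(\log n)$-length stabbing path is traversed with $O(1)$ work per node while still permitting the range-minimum query on the correct contiguous sub-list at each node; one must check that the catalogue graph has bounded degree and that the merged cascaded lists do not inflate the space bound. A secondary nuisance is the bookkeeping for degenerate configurations — the query segment being collinear with, sharing an endpoint with, or merely touching segments of $L$ — which the general-position assumption on $L$ mitigates but does not fully remove, since $s$ itself is not assumed to be in general position with respect to $L$.
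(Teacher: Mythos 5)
The paper itself gives no argument for this lemma: it is imported verbatim from \cite{rectintersection} and used as a black box in Section 3.2, so there is no internal proof to measure yours against; what you have written is a stand-alone reconstruction, and as such it is essentially sound. Your decomposition is the standard one and does deliver the stated bounds: for the vertical part, a segment tree on the $y$-intervals gives canonical sets of total size $O(n\log n)$, the stabbing path for $y$ is a root-to-leaf path in a bounded-degree tree, so Chazelle--Guibas fractional cascading on the per-node abscissa lists is applicable and reduces the two predecessor searches per node to $O(1)$ after a single $O(\log n)$ search at the root; provided you use a linear-time-constructible constant-query RMQ (Cartesian tree / LCA style) on each canonical list and presort the segments once by abscissa, the whole preprocessing stays within $O(n\log n)$, and the query is $O(\log n)$ as claimed. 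Two small repairs are needed. First, the lemma as stated makes no general-position assumption on $L$ (the paper's general-position hypothesis is about the obstacle edges, not about this abstract family), so your horizontal--horizontal case should not assume at most one segment of $L_h$ is collinear with $s$; this is easily fixed by bucketing the horizontal segments by supporting line and answering, within the relevant bucket, a minimum-weight interval-overlap query ($a_1\le x_2$ and $a_2\ge x_1$ is a two-sided dominance condition on the endpoint pair $(a_1,a_2)$, so a priority-search-tree-style structure gives $O(\log n)$ after $O(n\log n)$ preprocessing). Second, you prove the result only for horizontal $s$ and appeal to rotation; since the paper queries with horizontal shadow segments only, this costs nothing there, but for the lemma as stated you should simply keep a second, rotated copy of the structure. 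With those adjustments your route is a legitimate alternative to citing \cite{rectintersection}: it is more self-contained, at the price of the fractional-cascading and degenerate-case bookkeeping you already identify as the crux.
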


 Thus, the following lemma follows immediately.

\begin{lemma}\label{stop-point-query-segment} 
After $O(n \log n)$ time preprocessing, the stop segment for a query segment source can be found in $O(\log n)$ time.
\end{lemma}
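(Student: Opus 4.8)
The plan is to reduce the stop-segment query to the two range-searching primitives just established (Lemma \ref{stb} and Lemma \ref{lemintersection}), handling separately the two geometric configurations of the shadow segment $\overline{e}$ relative to the shadow ranges. I would argue for the north stop segment $\spp(\qs, N)$; the other three directions are symmetric after the appropriate coordinate transformation (replacing $y$ by $t - y/\vr$, or $x$ by $t - x/\vr$, and reversing orientation as needed).

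First I would fix the preprocessing: build both data structures once over the $n$ south shadow ranges $\{\range^s_i\}$ — the structure of Lemma \ref{stb} treating each $\range^s_i$ as a weighted rectangle with weight $\wv(\range^s_i) = Y_i$, and the structure of Lemma \ref{lemintersection} treating the four bounding edges of each $\range^s_i$ as weighted rectilinear segments, again with weight $Y_i$. Both take $O(n\log n)$ time and $O(n\log n)$ (resp. $O(n)$) space, so the total preprocessing is $O(n\log n)$, as claimed. Then, given a query segment source $\qs(e,t)$ with $e = ((x_1,y),(x_2,y))$, I would form its shadow segment $\overline{e} = ((x_1, t - y/\vr),(x_2, t - y/\vr))$ in $O(1)$ time.

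Next I would split into the two cases identified in the text. Case (i): some shadow range $\range^s_i$ contains $\overline{e}$ entirely. Then by Observation \ref{shadow-obs2} (and the remark that the north stop point of \emph{every} point source on $\qs$ lies on the corresponding edge), the edge carrying $\spp(\qs,N)$ is the minimum-weight shadow range containing an arbitrary point of $\overline{e}$, say its left endpoint; one call to the structure of Lemma \ref{stb} returns it in $O(\log n)$ time. Case (ii): $\overline{e}$ crosses the boundary of the relevant shadow range. Then $\spp(\qs,N)$ lies on the edge $E_i$ whose shadow-range boundary intersects $\overline{e}$ with minimum weight, so one call to the structure of Lemma \ref{lemintersection} with query segment $\overline{e}$ returns it in $O(\log n)$ time. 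To obtain the answer unconditionally, I would simply run both queries and take the one of smaller weight (an empty answer counts as weight $+\infty$); this correctly covers the case distinction without having to test in advance which configuration holds. Finally, from the reported edge $E_i$ and the interval $[x_1,x_2]$ I would read off the actual stop segment (the sub-segment of $E_i$ directly above $e$) and its arrival time in $O(1)$.

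The main obstacle is the correctness argument for the combined query in Case (ii): I must verify that the \emph{first} intersection encountered by the segment $e$ dragged northward is indeed captured by "minimum-weight shadow-range boundary crossing $\overline{e}$." The subtlety is that a shadow range might contain one endpoint of $\overline{e}$ but not the other, or $\overline{e}$ might pass through a range's interior while straddling two of its sides; I would handle this by noting that if $E_i$ is the true stop edge then, by Observation \ref{shadow-obs2}, $\overline{e}$ meets $\range^s_i$, and since (in Case (ii)) $\overline{e}$ is not wholly inside it, $\overline{e}$ must cross $\partial\range^s_i$; conversely any edge whose shadow-range boundary is crossed by $\overline{e}$ is hit by \emph{some} point source on $\qs$ before any edge of strictly larger $Y$-value, by the same weighted-minimization reasoning used for the single-point case in Section \ref{stop}. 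Once this equivalence is pinned down, the running time and preprocessing bounds are immediate from Lemmas \ref{stb} and \ref{lemintersection}.
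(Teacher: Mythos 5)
Your proposal is correct and follows essentially the same route as the paper: form the shadow segment $\overline{e}$, use Observation \ref{shadow-obs2} to reduce the query to finding the minimum-weight shadow range meeting $\overline{e}$, and handle the two configurations via the point-stabbing structure of Lemma \ref{stb} and the segment-intersection structure of Lemma \ref{lemintersection}. Your refinement of running both queries unconditionally and returning the smaller-weight answer is a small but sound improvement in presentation, since the paper leaves implicit how one would decide which of the two cases applies.
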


\subsection{Range Searching for Minimum} \label{seg}

\begin{lemma} \label{queryR} {\normalfont \cite{Chazelle-MST}}
Let $H$ be a dynamic set of points in ${\rm I\!R}^2$ where insertions and deletions of the points are allowed. In $O(n \log n)$ time, we can preprocess $H$ into a data structure, so that, for a given query axis-parallel rectangle $r$, we can determine a minimum weight point inside $r \cap H$ in $O(\log n)$ time. $H$ can be updated in $O(\log n)$ time per insertion/deletion.
\end{lemma}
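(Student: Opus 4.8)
The plan is to reduce the weighted rectangle‑minimum query to $O(\log n)$ one‑dimensional interval‑minimum queries via a two‑level (range‑tree style) structure, and then remove the resulting extra logarithmic factor by cascading the second‑level searches. First I would maintain a weight‑balanced binary search tree $T$ on the $x$‑coordinates of the points of $H$ (a $BB[\alpha]$ tree, so that rebuilding a subtree of size $m$ costs $O(m)$ and is charged $O(1)$ amortized per update). Each node $v$ of $T$ represents the canonical subset $P_v$ of the points stored in its subtree. For a query rectangle $r=[x_1,x_2]\times[y_1,y_2]$, the two search paths to $x_1$ and $x_2$ identify $O(\log n)$ pairwise‑disjoint canonical nodes $v_1,\dots,v_k$ with $\{p\in H : x_1\le X_p\le x_2\}=\bigcup_i P_{v_i}$, so the answer is $\min_i \mu(v_i,y_1,y_2)$, where $\mu(v,y_1,y_2)$ denotes the minimum weight among points of $P_v$ whose $y$‑coordinate lies in $[y_1,y_2]$.

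Next I would implement the secondary structures: at each node $v$ store $P_v$ in a balanced BST keyed on $y$ in which every node additionally records the minimum weight over its own subtree (McCreight's priority search tree would also work, but a weight‑augmented BST is simpler to keep dynamic). A standard walk down from the split node of $[y_1,y_2]$, combining $O(1)$ precomputed subtree‑minima per level, then evaluates $\mu(v,y_1,y_2)$ in $O(\log |P_v|)$ time, while insertions and deletions into such an augmented BST cost $O(\log n)$. Taken literally this gives only $O(\log^2 n)$ per query, since we pay a fresh $O(\log n)$ search in each of the $O(\log n)$ secondary trees. To reach $O(\log n)$, I would thread the secondary trees of the nodes hanging off the two primary search paths with (dynamic) fractional cascading, so that one $O(\log n)$ search for $y_1$ (and for $y_2$) in the secondary tree at the root of $T$ is propagated to each descendant secondary tree in $O(1)$ amortized time; the constant‑size subtree‑minimum bookkeeping per level is unaffected, so the query cost drops to $O(\log n)$.

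For the dynamization, note that inserting or deleting a point alters only the $O(\log n)$ canonical sets $P_v$ along one root‑to‑leaf path of $T$, so $O(\log n)$ secondary structures and their cascading links must be updated; coupling this with the partial‑rebuilding accounting for multi‑level $BB[\alpha]$ structures gives the claimed $O(\log n)$ amortized update time, and the $O(n\log n)$ preprocessing follows from building the structure level by level. The main obstacle, and the part where I expect to lean hardest on the cited construction, is squeezing the query bound down to exactly $O(\log n)$ while keeping updates at $O(\log n)$: a plain range tree yields $O(\log^2 n)$, and even textbook dynamic fractional cascading leaves an $O(\log\log n)$ slack per level, so the structure must exploit the specific form of the problem (a decomposable min‑semigroup query over an orthogonal rectangle) and the interplay between dynamic fractional cascading and the weight‑balanced rebuilding of both tree levels.
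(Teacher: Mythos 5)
The paper does not prove this lemma at all: it is quoted verbatim as a known result from the cited reference (Chazelle), and the algorithm uses it as a black box. So the only question is whether your blind construction actually establishes the claimed bounds, and it does not. Your first two levels are fine and standard: a weight-balanced primary tree on $x$, canonical subsets, and $y$-sorted secondary trees augmented with subtree minima give a dynamic structure with $O(n\log n)$ preprocessing, $O(\log n)$ amortized update, and $O(\log^2 n)$ query, since a rectangle query decomposes into $O(\log n)$ independent one-dimensional interval-minimum queries.

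The gap is exactly the step you flag at the end but then wave through: removing the extra $\log$ factor. Your claim that threading the secondary trees with dynamic fractional cascading lets one search for $y_1,y_2$ be propagated to each descendant secondary structure in $O(1)$ amortized time is not justified and is not what dynamic fractional cascading delivers; in the dynamic setting (Mehlhorn--N\"aher) each hop between catalogs costs $O(\log\log n)$, which yields $O(\log n\log\log n)$ query time, not $O(\log n)$, and maintaining the cascading links consistently under the $BB[\alpha]$ partial rebuilds of the primary tree (which wholesale rebuild secondary structures) is itself a nontrivial issue you do not address. You also do not explain how the cascaded positions interact with the subtree-minimum bookkeeping after rebalancing rotations in the secondary trees. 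In other words, the proposal reduces the lemma to precisely the hard part --- a dynamic two-dimensional range-minimum structure with genuinely $O(\log n)$ query and $O(\log n)$ update --- and then defers that part to ``the cited construction.'' Since the paper likewise only cites this result, your write-up cannot stand as a proof of the lemma; to make it self-contained you would either have to reproduce the specific machinery of the cited data structure (e.g., its compact/filtering-search organization of the secondary lists) or observe that, for the way Lemma~\ref{queryR} is actually invoked in this paper (the weight is the $L_1$ distance to a fixed corner and the query rectangle has the source point as a corner), a three-sided/priority-search-tree style structure with $O(\log n)$ per operation could suffice, which would be a genuinely different and more elementary route than the general statement you set out to prove.
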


Let $V$ be the set of all vertices (end points of the edges in $E$) union $\{s,d\}$. Let $\range$ be a query rectangle and $p$ be one of its corners. We denote by $v_m$, a vertex of $V$ located inside $\range$ with minimum $L_1$ distance to $p$. W.l.o.g., assume $p$ is the bottom-left corner of $\range$. Let $B$ be an axis-parallel rectangular bounding box that contains all vertices in $V$. We assign a weight to each vertex $v\in V$, denoted by $\wv(v)$, which is the $L_1$ distance between $v$ and the bottom left corner of $B$. Observe that $v_m$ is the minimum weight vertex in $\range$. By Lemma \ref{queryR}, there is a data structure \cite{Chazelle-MST} that allows finding $v_m$ in $O(\log n)$ query time, after $O(n \log n)$ preprocessing time.

Let $\qp(
p, t)$ be a query point source and $\spp (\qp, N)$ and $\spp (\qp, E)$ be its north and east stop points, respectively. We define the \textbf{north-east search region} for $\qp$ as the rectangle where $\spp (\qp, N)$ and $\spp (\qp, E)$ are its two opposite corners (see Figure \ref{point-prop} (a) as an example). If $\spp (\qp, N)$ or $\spp (\qp, E)$ does not exist, we say that the north-east search region is \textit{undefined}. Thus, we can define, at most, four search regions corresponding to each query point source. For a horizontal (or vertical) segment source $\qs(
e, t)$, we define the \textbf{segment search region} of $\qs$ as a vertical (or horizontal) strip of width $|e|$ that entirely contains $e$. In the next section, we use the search regions to locate the events where the wavelets hit the obstacles.

\section{Algorithm} \label{alg}
 We design a simple data structure
 to represent the wavelets. 
 For each wavelet $\wv(q, t, \range)$, the data structure contains the following information:
\begin{itemize} 
  \item The source $q$, from which the wavelet is propagated. Recall that, wavelets with inclinations 
$\{0, \pi / 2, \pi, 3\pi/2\}$ originate from segment sources. Conversely, wavelets with inclinations 
$\{\pi / 4, 3\pi / 4, 5\pi / 4, 7\pi / 4\}$ originate from the point sources.
  \item The corresponding departure time of the wavelet, denoted by $t$. 
  \item A search region $\range$, which contains potential next intersections between the wavelet and the obstacles. In order to propagate $\wv$, we allow the wavelet to sweep the interior of $r$ and report ``hits'' by the wavelet. 
\end{itemize}

\subsection{Propagation} \label{tcp}
 
Propagating a wavelet $\wv(q, t, \range)$ means to allow the wavelet to sweep in its designated direction, until it hits a vertex $v$ (or alternatively, the body of an edge). We assume that the minimum arrival time at $q$ has been already calculated. Then, we calculate a potential minimum arrival time at $v$ using a shortest $L_1$ path from $q$ to $v$. This may involve deleting, updating, and creating wavelets corresponding to the advancing wavefront. Since the source of a wavelet is either a point or a segment, we present two algorithms to propagate these wavelets. 


Before discussing the propagation algorithms, we introduce some notation and give some definitions. Let $\qp(
p, t)$ be a point source and $\range^{NE}$ be its corresponding north-east search region. We denote by $\wv(\qp, t, \range^{NE})$ a wavelet that originates at $\qp$ and is propagating north-east inside $\range^{NE}$. Similarly, we can define (at most) three more wavelets, in the three directions $\{NW, SE, SE\}$, originating from $\qp$. Denote by $\prs(\qp)$ the set of wavelets originating from $\qp$ in all (at most) four directions. By Lemma \ref{spqp}, $\prs(\qp)$ can be found in $O(\log n)$ time, after $O(n \log n)$ preprocessing time. By Lemma \ref{spqp}, for each wavelet $\wv \in \prs(\qp)$, we can find the closest vertex to $\qp$ inside its corresponding region in $O(\log n)$ time, we denote this vertex by $\Gamma(\wv)$.

We design an algorithm called $PropagatePoint(\wv)$ (for details, see Algorithm \ref{propag}) which propagates a point wavelet $\wv(\qp,t,\range)$ inside its corresponding search region $\range$. There are four types of point wavelets depending on their directions (i.e, $NE$, $NW$, $SE$ and $SW$). W.l.o.g., we assume $\wv(\qp,t,\range)$ is propagating north-east; other directions can be treated analogously. 
Two types of events are discovered by this algorithm, which are explained next.

\begin{figure}[t]
\captionsetup[subfigure]{justification=centering}
\centering
 \begin{subfigure}{.45\linewidth}
 \centering
 \includegraphics[width=165pt]{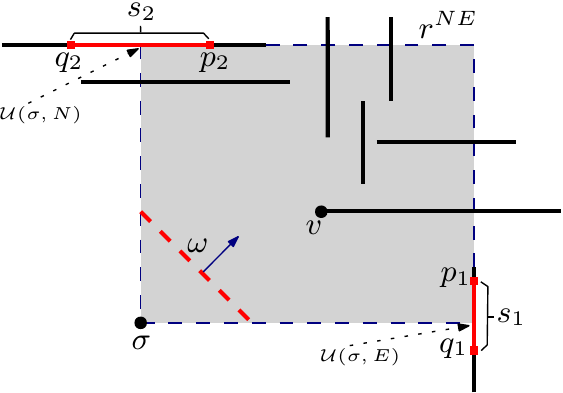}
  \caption{}
 \end{subfigure}
 \begin{subfigure}{.45\linewidth}\centering
    \vspace{0.5\baselineskip}
 \includegraphics[width=100pt]{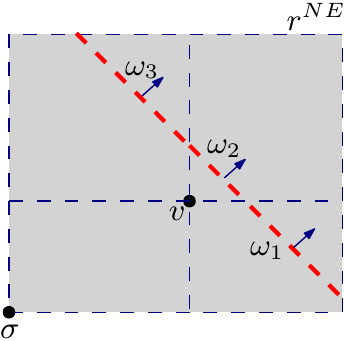}
    \vspace{0.9\baselineskip}
  \caption{}
 \end{subfigure}
\caption{The process of point propagation is illustrated. (a) $\wv$ is propagating north-east and $v$ is the first vertex it intersects; (b) when $\wv$ intersects $v$, it is split into three new wavelets $\wv_1$, $\wv_2$ and $\wv_3$. } \label{point-prop}
\end{figure}

Firstly, the algorithm finds all accessible segments of $\qp$, on the boundary of $\range$. For an example, see $s_1$ and $s_2$ in Figure \ref{point-prop} (a). Since the vertices are located in general position, there are at most four accessible segments on the boundary of $\range$ (one for each edge of $\range$). For each discovered accessible segment,
the algorithm adds two types of wavelets to the queue: (1) a segment wavelet whose departure time is the disappearance time of its associated edge and (2) a set of point wavelets originating from the end points of the segment (in Figure \ref{point-prop} (a), these points are denoted by $q_1, p_1, q_2$ and $p_2$). 
These two types of wavelets are added in lines \ref{seg-ln} and \ref{vert-ln} of Algorithm \ref{propag}. By Lemmas \ref{stop-point-query-segment} and \ref{queryR}, we identify these wavelets in $O(\log n)$ time.

Secondly, the algorithm discovers the closest vertex to $\qp$, say $v$ (i.e., $v = \Gamma(\wv)$ in line \ref{first-ln}), as it is the first vertex hit by the wavelet. By Lemma \ref{queryR}, this vertex can be determined in $O( \log n )$ time. When $\wv$ hits $v$, the wavelet is split into three new wavelets. In Algorithm \ref{propag} and in Figure \ref{point-prop} (b), these wavelets are denoted by $\wv_1$, $\wv_2$ and $\wv_3$.
 We also create (at most) four new wavelets, corresponding to $v$ and its search regions (i.e., the wavelets in $\prs(\qp)$). Since all these operations are executed in $O(\log n)$ time, we can conclude that $PropagatePoint(\wv)$ takes $O(\log n)$ time.


\begin{algorithm}[H]
\caption{$PropagatePoint(\wv(\qp,t,\range))$} \label{propag}
\begin{description}
\item {\bf Input}: A point wavelet $\wv$.
\item {\bf Output}: Propagates $\wv$ and adds new wavelets to the queue if necessary. 
\end{description}
\begin{algorithmic}[1]
\STATE let $\range=(X_1, X_2) \times (Y_1, Y_2)$
\STATE let $\qp = ((x, y), t_0)$
\IF {$(x, y)$ is a vertex} 
\STATE \label{del-ver-ln} delete the vertex from $V$ and ``permanently'' label it with the value of $t$
\ENDIF
\STATE \label{as-ln} let $S$ be the set of accessible segments of $\qp$, on the boundary of $\range$
\par \COMMENT{since vertices are in general positions, we have $|S| \leq 4$}
\FOR{each $e \in S$, where $e$ is located on $E_i$}
\STATE let $\range^e$ be the segment search region of $e$
\STATE \label{seg-ln} add $\wv((e, T^d_i),T^d_i,\range^e)$ to the queue
\FOR{each end point $p$ of $e$}
\STATE \label{vert-ln} add the wavelets of $\prs((p, T^d_i))$ to the queue \par \COMMENT{the point wavelets originating from $(p, T^d_i)$}
\ENDFOR
\ENDFOR
\STATE w.l.o.g., assume $\wv$ is propagating north-east
\STATE \label{first-ln} let $v{=}(x', y') = \Gamma(\wv)$ \COMMENT{$v$ is the closest vertex to $\qp$ inside $\range$}
\STATE \label{min-arrive-ln} let $t' = t_0 + \Big({|x'-x|+|y'-y| \over \vr}\Big)$ \COMMENT{$t'$ is the minimum arrival time at $v$}
\STATE let $\qp' = (v, t')$
\STATE add the wavelets of $\prs(\qp')$ to the queue  \par \COMMENT{the point wavelets originating from $\qp'$}
\STATE let $\range_1=(X_1, x') \times (Y_1, Y_2)$ \COMMENT{bottom split}
\STATE \label{w1} add $\wv_1(\qp,t',\range_1)$ to the queue
\STATE let $\range_2=(X_1, X_2) \times (Y_1, y')$ \COMMENT{top split}
\STATE \label{w2} add $\wv_2(\qp,t',\range_2)$ to the queue
\STATE let $\range_3=(x', X_2) \times (y', Y_2)$ \COMMENT{middle split}
\STATE \label{w3} add $\wv_3(\qp',t',\range_3)$ to the queue
\end{algorithmic}
\end{algorithm}

We now describe an algorithm called $PropagateSegment(\wv)$ (see Algorithm \ref{propag2}). This algorithm takes a segment wavelet as input and propagates it inside its corresponding search region. There are four types of segment wavelets depending on their directions (i.e., $N$, $E$, $W$ and $S$). W.l.o.g., assume $\wv(\qs,t,\range)$ is propagating north. 
 The algorithm finds the north stop segment for $\qs$, denoted by $e'$. Note that, by Lemma \ref{stop-point-query-segment}, this can be done in $O(\log n)$ time.
 When $\wv$ hits $e'$, $\wv$ is split into smaller wavelets as follows. 
 A new wavelet $\wv'$ originating from $e'$, is added to the queue (in line \ref{new-wave-seg-ln} of Algorithm \ref{propag2}). Then, the algorithm adds smaller wavelets for the parts of $\wv$ which do not hit $e'$. As an example, see $\wv_1$ and $\wv_2$ in Figure \ref{seg-prop} (b). Additionally, if a vertex is hit by the segment wavelet, our algorithm adds (at most) four point wavelets originating from the vertex. By Lemma \ref{spqp}, this can be done in $O(\log n)$ time. Thus, $PropagateSegment(\wv)$ runs in $O(\log n)$ time.

\begin{algorithm}[H]
\caption{$PropagateSegment(\wv(\qs,t,\range))$} \label{propag2}

\begin{description}
\item {\bf Input}: A segment wavelet $\wv$.
\item {\bf Output}: Propagates $\wv$ and adds new wavelets to the queue if necessary.
\end{description}
\begin{algorithmic}[1]
\STATE w.l.o.g., assume $\wv$ is propagating north
\STATE let $\qs = (((x_1, y),(x_2,y)),t)$
\STATE let $e' = \spp(\qs, N)$  \COMMENT{$e'$ is the north stop segment for $\qs$ inside $\range$}
\STATE suppose $e' = ((x_1', y'),(x_2',y'))$, where $e'$ is locate on the edge $E_i$
\STATE \label{new-wave-seg-ln} let $\qs' = (e',T^d_i)$ 

\STATE let $\range'$ be the segment search region of $e'$
\STATE \label{add-seg-ln} add $\wv'(\qs',T^d_i,\range')$ to the queue

\IF {$(x_1', y')$ is a vertex} 
\STATE \label{4-w-ln-1}
add the wavelets of $\prs(((x_1', y'), t'))$ to the queue  \par \COMMENT{the point wavelets originating from $((x_1', y'), t')$}
\STATE let $\qs_1 = (((x_1, y'),(x_1',y')),t')$
\STATE let $\range_1$ be the segment search region of $\qs_1$ 
\STATE add $\wv_1(\qs_1,t',\range_1)$ to the queue
\ENDIF
\IF {$(x_2', y')$ is a vertex} 
\STATE \label{4-w-ln-2} add the wavelets of $\prs(((x_2', y'), t'))$ to the queue
\STATE let $\qs_2 = (((x_2', y'),(x_2,y')),t')$
\STATE let $\range_2$ be the segment search region of $\qs_2$ 
\STATE add $\wv_2(\qs_2,t',\range_2)$ to the queue
\ENDIF
\end{algorithmic}
\end{algorithm}

\begin{figure}[t]
\captionsetup[subfigure]{justification=centering}
\centering
 \begin{subfigure}{.45\linewidth}
 \centering
 \includegraphics[width=125pt]{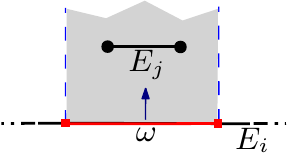}
  \caption{}
 \end{subfigure}
 \begin{subfigure}{.45\linewidth}\centering
 \includegraphics[width=125pt]{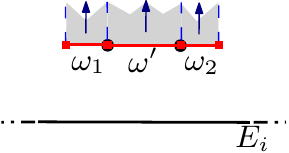}
  \caption{}
 \end{subfigure}
\caption{The process of a segment propagation is illustrated.} \label{seg-prop}
\end{figure}

\subsection{A Naive Algorithm} \label{nai-sec}

In this section, we present a ``naive'' algorithm (see Algorithm \ref{naive-alg}), which reports the minimum arrival time at the destination in $O(n^2 \log n)$ time. Although the algorithm is not efficient, it illustrates our global approach and serves as the basis
for our optimal algorithm describe later in Section \ref{eff-sec}.

In this algorithm, our approach is to find the minimum arrival time at every vertex in $V$ from the source. To achieve this, a set of wavelets is created and maintained in a priority queue, whose keys are their $t$ values (i.e., their corresponding departure times). The queue is initialized with four initial point wavelets originating from the start point $s$ in four directions $NE$, $NW$, $SE$ and $SW$. In each iteration, the algorithm proceeds by extracting a wavelet from the queue with lowest value of $t$. If the wavelet originates from the destination point, the minimum arrival time at the destination has been found. Otherwise, depending on whether $\wv$ is a point wavelet or a segment wavelet, $PropagatePoint(\wv)$ or $PropagateSegment(\wv)$ is executed, respectively. Recall that,
these algorithms propagate the given wavelet and add, or update, the wavelets in the queue, if necessary. In the following lemma, we prove that the naive algorithm correctly finds the minimum arrival at the destination.

\begin{algorithm}[H]
\caption{Naive Algorithm} \label{naive-alg}
\begin{description}
\item {\bf Input}: A set of transient edges $E$, a source point $s$ and a destination point $d$
\item {\bf Output}: The minimum arrival time at the destination point $d$
\end{description}
\begin{algorithmic}[1]

\STATE initialize an empty priority queue $Q$.
\STATE add the wavelets of $\prs((s, 0))$ to the queue  
\WHILE {$Q \not= \emptyset$}

  \STATE \label{extract-ln} extract a wavelet $\wv(q, t, \range)$ from the queue whose associated time $t$ is minimum

  \IF {$q=(p, t_0)$ is a point source}
    \IF {$p$ is the destination} 
      \STATE return $t_0$
    \ENDIF
    \STATE \label{pp-ln} $PropagatePoint(\wv)$ 
  \ENDIF
  \IF {$q=(e, t_0)$ is a segment source}
    \STATE \label{ps-ln} $PropagateSegment(\wv)$ 
  \ENDIF
\ENDWHILE
\end{algorithmic}
\end{algorithm}

\begin{lemma}\label{naiveproof}
The naive algorithm reports the minimum arrival time at the destination.
\end{lemma}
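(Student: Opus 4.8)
The plan is to establish correctness by a standard continuous-Dijkstra argument: the wavelets, ordered by departure time, are extracted in nondecreasing order of the minimum arrival time of the sources they carry, and every vertex (in particular $d$) is eventually reached by a wavelet whose departure time equals the true minimum arrival time at that vertex. I would phrase this as an induction on the order of extraction from the priority queue.

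First I would argue the invariant that, at the moment a wavelet $\wv(q,t,\range)$ is extracted with minimum key $t$, the value $t$ is the true minimum arrival time at the source $q$ (for a point source, $t$ is the arrival time at that point; for a segment source, $t = T_i^d$ is the departure time after the necessary wait). The base case is the four wavelets of $\prs((s,0))$, which correctly encode arrival time $0$ at $s$. For the inductive step, consider any time-minimal path $\pi(s,q)$; by the Monotonicity Property (Lemma \ref{monotone}) it decomposes into monotone pieces between consecutive vertices/wait points, and each such piece is a shortest $L_1$ path that lies inside exactly one of the four quadrant search regions of its originating source. Hence the predecessor source $q'$ of $q$ on $\pi(s,q)$ has strictly smaller (or equal, in degenerate cases handled by the general-position assumption) minimum arrival time, so by induction it was already extracted; when it was processed, $PropagatePoint$ or $PropagateSegment$ created a wavelet that sweeps the search region containing the final monotone piece, and by Lemmas \ref{spqp}, \ref{stop-point-query-segment}, \ref{queryR} that wavelet (or one of its recursive splits) discovers $q$ with exactly the key $t = T(q',\cdot)$-value that realizes $\pi(s,q)$. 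I would need to check the two event types separately: a vertex hit (the splitting into $\wv_1,\wv_2,\wv_3$ plus the four wavelets of $\prs(\qp')$ preserves coverage of every monotone continuation), and an accessible-segment hit on an edge $E_i$, where the wait is modeled by the segment wavelet with departure time $T_i^d$ together with the point wavelets from the segment endpoints — this is exactly the content of Observation \ref{prepan-obs}, which guarantees the departure move is perpendicular to $E_i$ and hence captured by a segment wavelet.

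The main obstacle, and the step I would spend the most care on, is showing that the recursive splitting of a wavelet upon a vertex or segment hit does not "lose" any point of the plane that lies on some time-minimal path: one must verify that the union of the child search regions, taken over the whole execution, covers every monotone subpath of every optimal path, so that no optimal path segment slips through a gap between a $\range_1,\range_2,\range_3$ decomposition. I would handle this by a geometric case analysis on where the next vertex/edge of $\pi(s,q)$ lies relative to the split lines $x = x'$ and $y = y'$ (for point propagation) or the stop segment $e'$ (for segment propagation), using monotonicity to conclude the continuation stays within one child region. A secondary point to address is termination and the absence of spurious early returns: since keys are nondecreasing and each propagation deletes the processed vertex from $V$ (line \ref{del-ver-ln}) and labels it permanently, each vertex is finalized once, the queue processes $O(n)$ genuine events, and when the wavelet originating from $d$ is extracted its key is, by the invariant, the true minimum arrival time — so returning $t_0$ at that moment is correct.
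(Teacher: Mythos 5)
Your overall strategy is the same as the paper's: a Dijkstra-style induction over extracted wavelets, using the Monotonicity Property (Lemma \ref{monotone}) and a case split between subpaths with and without wait points, with the wait-point case handled by the chain of segment wavelets whose departure times are the disappearance times (Observation \ref{prepan-obs}). However, two points where you explicitly defer the work are exactly where the paper's proof has content, so as written there is a genuine gap. First, your invariant is stated too strongly: in the naive algorithm (no narrowing) a vertex can be hit by many wavelets, so wavelets whose key is \emph{not} the minimum arrival time at their source are also created and later extracted; what must be proved (and what the paper proves) is the weaker statement that for every vertex $v$ \emph{some} wavelet $\wv((v,t_v),t_v,\range_v)$ with $t_v$ equal to the true minimum arrival time is placed in the queue, which together with the nondecreasing keys justifies returning $t_0$ at the first extraction of a wavelet sourced at $d$.

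Second, and more importantly, your claim that each monotone piece of an optimal path ``lies inside exactly one of the four quadrant search regions of its originating source'' is false precisely in the hard case: the search region $\range_u$ is bounded by the stop points of $(u,t_u)$, so the piece $\pi(u,v)$ may leave $\range_u$ even though it has no wait point. The paper resolves this (its case ($i$)(b)) by arguing that the accessible segment $e$ on the boundary of $\range_u$ cannot span the whole rectangle $M$ spanned by $u$ and $v$ — otherwise Observation \ref{prepan-obs} would force a wait point on $e$, contradicting the no-wait assumption — hence an endpoint of $e$ lies inside $M$, and the point wavelet spawned at that endpoint (line \ref{vert-ln} of Algorithm \ref{propag}) carries the propagation forward; iterating gives a monotone chain of points north-east of one another so the $L_1$ arrival times telescope to $t_u + \norm{uv}/\vr$. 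Similarly, in your ``coverage under splitting'' step, the needed argument (the paper's case ($i$)(a)) is that each discovered vertex $v_i$ lies north-east of $v_{i-1}$ inside $\range_u$, one of the three child regions from lines \ref{w1}--\ref{w3} always contains $v$, and each split strictly decreases the vertex count, so after finitely many extractions some wavelet has $v=\Gamma(\wv_h)$ with the telescoped optimal key. A proof of the lemma must actually carry out these two arguments; your proposal only announces that a case analysis would be done.
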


\begin{proof}
We prove that for every vertex $v \in V$, there is a wavelet $\wv((v, t_v),t_v,\range_v)$ in the priority queue (before time $t_v$), such that, when $\wv$ is extracted, $t_v$ is the minimum arrival time at $v$. In such case, we say that $v$ is ``assigned correctly''. At any given time, let $S$ be a set of vertices whose corresponding wavelets have been extracted from the queue. 
The proof follows by induction on $|S|$. It is easily seen that for $|S| \leq 2$ all vertices in $S$ are assigned correctly. 
For the inductive step, let $v$ be the last vertex added to $S$ and
assume that all vertices in $S \setminus \{v\}$ have
been assigned correctly. We now prove that our algorithm computes the minimum arrival time at $v$.

Let $\pi$ be a collision-free shortest path from $s$ to $v$. Let $u=(X_u, Y_u)$ and $v=(X_v, Y_v)$ be two consecutive vertices on $\pi$.
We denote by $\pi(u, v)$, the sub-path of $\pi$ from $u$ to $v$. Let $M$ be a rectangle where $u$ and $v$ are its two opposite corners. By the monotonicity property, observe that $\pi(u, v)$ is a monotone path inside $M$.
By the inductive hypothesis, there is a wavelet $\wv((u, t_u),t_u,\range_u)$ in the priority queue, where $t_u$ is the minimum arrival time at $u$. In the remaining of this proof, we assume $\wv$ is propagating north-east and $u$ is the bottom left corner of $r_u$. Other directions can be treated analogously.
We then consider two cases: \textbf{($i$)} $\pi(u, v)$ contains no wait points and \textbf{($ii$)} $\pi(u, v)$ contains wait points. In either case, we prove that our algorithm assigns $v$ correctly.

\begin{itemize}
\item [\textbf{($i$)}] Suppose $\pi(u, v)$ contains no wait points. 
 In line \ref{pp-ln} of Algorithm \ref{naive-alg}, $PropagatePoint(\wv)$ is executed to propagate $\wv$. Let $(X^N, Y^N)$ be the north stop point and $(X^E, Y^E)$ be the east stop point for $(u, t_u)$. 
 Again, we break down the problem into two scenarios: (a) when $v \in \range_u$ and (b) when $v \not\in \range_u$.
\begin{itemize}
 \item[(a)] 
 Assume $v$ is the closest vertex to $u$ inside $\range_u$, i.e., $v = \Gamma(\wv)$. Thus, in line \ref{min-arrive-ln} of Algorithm \ref{propag}, we have $t_v = t_u + \Big({|X_v- X_u|+|Y_v -Y_u| \over \vr}\Big)$ which is the minimum arrival time at $v$. 
Now, assume $v_1 = \Gamma(\wv)$, where $v_1 \not= v$. Then, in lines \ref{w1}, \ref{w2} and \ref{w3}, three new wavelets $\wv_1(\qp,t',\range_1)$, $\wv_2(\qp,t',\range_2)$ and $\wv_3(\qp',t',\range_3)$ are added to the queue, where $t'$ is the arrival time at $v_1$.
We derive the following observations for the three search regions $\range_1$, $\range_2$ and $\range_3$: (1) each is a sub-region of $\range_u$, (2) each contains fewer vertices than $\range_u$ (because $v_1$ is deleted in line \ref{del-ver-ln} of Algorithm \ref{propag}) and (3) one of them contains $v$. 
W.l.o.g., assume $\wv_1$ is the one whose search region contains $v$.
Later on, our algorithm extracts $\wv_1$ and thus, we have $v_2 = \Gamma(\wv_2)$. If $v_1 \not= v$, the above procedure is repeated again.
 Since there are at most $n$ vertices inside $r_u$ and one vertex is deleted in each iteration, after at most $O(n)$ iterations, there is a wavelet $\wv_h$ in the queue where $v = \Gamma(\wv_h)$. Let $\{u, v_1, v_2, ..., v_h, v\}$ be the sequence of the vertices in the above process. Since in each iteration, $v_i$ is located north-east of $v_{i-1}$, we have $t_v = t_u + \Big({|X_v- X_u|+|Y_v -Y_u| \over \vr}\Big)$, which is the minimum arrival time at $v$.

 \item[(b)] Suppose $v \not\in \range_u$. Thus, we have $Y^N < Y_v$ or $X^E < X_v$ (or both). W.l.o.g., assume $Y^N < Y_v$. Let $e$ be the north accessible segment for $\range_u$ (which is discovered in line \ref{as-ln} of Algorithm \ref{propag}), where $(X_1, Y_1)$ is its right end point. Note that, $e$ intersects the left edge of $M$. However, if $e$ also intersects the right edge of $M$, then, by Observation \ref{prepan-obs}, any path from $v$ to $u$ inside $M$, must have a wait point on $e$. 
 Since we assumed $\pi(u, v)$ contains no wait points, so, $(X_1, Y_1)$ must be located inside $M$ (i.e., south-west of $v$). 

Next, in line \ref{first-ln} of Algorithm \ref{propag}, a wavelet $\wv_1(((X_1, Y_1), t_1), t_1, \range_1)$ is added to the queue, where $t_1$ is the minimum arrival time at $(X_1, Y_1)$.
Now, if $v \not\in \range_1$, our algorithm repeats the above procedure until there is a wavelet $\wv_k$ in the queue, originating from $(X_k, Y_k)$, where $v \in \range_k$. Let $\{u, (X_1, Y_1), ..., (X_k, Y_k)\}$ be the sequence of the points, discovered in the above process between $u$ and $v$. Since in each iteration $(X_i, Y_i)$ is located north-east of $(X_{i-1}, Y_{i-1})$, we have $t_k = t_u + \Big({|X_k- X_u|+|Y_k -Y_u| \over \vr}\Big)$. Now that $v \in \range_k$, by Case (a), we have $t_v = t_u + \Big({|X_v- X_u|+|Y_v -Y_u| \over \vr}\Big)$, which is the minimum arrival time at $v$.

\end{itemize}

\item [\textbf{($ii$)}] Let $W=\{p_1, ..., p_h\}$ be the set of wait points on $\pi(u, v)$, sorted by their departure times. W.l.o.g., for each $p_i \in W$, let $E_i \in E$ be the ``horizontal'' edge where $p_i$ is located on. 
We derive the following observations:
(1) by Observation \ref{prepan-obs}, $\pi(p_1, v)$ is a vertical line segment which contains all the wait points, (2) $\pi(u, p_1)$ contains no wait points, and (3) the minimum arrival time at $v$ is $T^{d}_h + {\norm{ve_h} \over \vr}$, where $\norm{ve_h}$ is the distance between $v$ and $p_h$.

 Since $\pi(u, p_1)$ contains no wait points, similar to the ($i$), Algorithm \ref{propag} recursively adds a sequence of point wavelets to the priority queue, which eventually calculates the minimum arrival time at $p_1$. 
 Let $\wv_1((e_1, T^{d}_1),T^{d}_1,\range_1)$ be a segment wavelet where $p_1 \in e_1$. 
Thus, $PropagateSegment(\wv_1)$ is executed to propagate this wavelet.
 Observe that $p_2$ is located on the north stop segment for this wavelet. Thus, in line \ref{add-seg-ln} of Algorithm \ref{propag2}, $\wv_2((e_2, T^{d}_2), T^{d}_2,\range_2)$ is added to the queue. 
 Recursively, the algorithm repeats the above procedure until there is a wavelet $\wv_h((e_h, t_h),t_h,\range_h)$ in the queue, where $p_h \in e_h$. Since $v$ is a vertex located inside $\range_h$, in line \ref{4-w-ln-1} 
 of the Algorithm \ref{propag2}, $\wv((v, t_v),t_v,\range_v)$ is added to the queue where $t_v=t_h + {\norm{ve_h} \over \vr}$, which is the minimum arrival time at $v$. 
\end{itemize}
By \textbf{($i$)} and \textbf{($ii$)}, the naive algorithm assigns $v$ correctly, which completes the induction.
\end{proof}

\begin{lemma}\label{timepropag}
The naive algorithm runs in $O(n^2 \log n)$ time.
\end{lemma}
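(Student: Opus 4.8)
The plan is to charge all the work to the wavelets: by the discussion in Section~\ref{tcp}, each call to $PropagatePoint$ or $PropagateSegment$ runs in $O(\log n)$ time (using the structures of Lemmas~\ref{spqp}, \ref{stop-point-query-segment} and \ref{queryR}), and each insertion into / extraction from the priority queue costs $O(\log n)$. Hence the total running time is $O(\log n)$ times the number of wavelets ever inserted into $Q$, and it suffices to prove that this number is $O(n^2)$. The first thing I would observe is that every propagation step inserts only $O(1)$ new wavelets directly: at most four segment wavelets from accessible/stop segments, a constant number of point wavelets from the endpoints of those segments, the at most four wavelets of $\prs(\qp')$, and the (at most three) "split'' wavelets $\wv_1,\wv_2,\wv_3$. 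So what is needed is an a~priori bound on the number of wavelet \emph{sources} and on the length of the recursive split chains emanating from them.

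The key step is to bound the point wavelets that share a fixed source $\qp=(p,t_0)$ and a fixed propagation direction, say $NE$. These are exactly the wavelets generated by the recursive splitting of lines~\ref{w1}--\ref{w3} of Algorithm~\ref{propag} along the $\wv_1$- and $\wv_2$-branches (the $\wv_3$-branch installs a new source located at a vertex of $V$). I would assign to such a wavelet the potential $\Phi(\wv)=$ (number of vertices of $V$ strictly inside its search region $\range$); a wavelet with $\Phi(\wv)=0$ is a leaf, since then $\Gamma(\wv)$ is undefined and no split occurs. The crucial geometric fact is that when $\wv=\wv(\qp,t,\range)$ with $\range=(X_1,X_2)\times(Y_1,Y_2)$ hits its closest interior vertex $v_1=\Gamma(\wv)=(x',y')$, the rectangle $(X_1,x')\times(Y_1,y')$ (the corner sub-rectangle of $v_1$ on the side of $p$) contains \emph{no} vertex of $V$: any such vertex would have strictly smaller $L_1$ distance to $p$ than $v_1$, contradicting the minimality of $v_1 = \Gamma(\wv)$. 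Since $\range_1=(X_1,x')\times(Y_1,Y_2)$ and $\range_2=(X_1,X_2)\times(Y_1,y')$ overlap only inside this empty sub-rectangle, we get $\Phi(\wv_1)+\Phi(\wv_2)\le \Phi(\wv)-1$ (we lose at least $v_1$, and no vertex is double-counted). A one-line induction on the potential then gives that a split tree with root potential $k$ has $O(k)=O(n)$ nodes, so each (source, direction) pair spawns only $O(n)$ point wavelets.

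To finish, I would bound the number of distinct sources and of segment wavelets. Every point-wavelet source is $s$, or a vertex of $V$ (those returned by $\Gamma$), or an endpoint of an accessible/stop segment, and every such segment lies on one of the $n$ input edges; once a wavelet originating from a vertex $v\in V$ is processed, $v$ is permanently deleted from $V$ (line~\ref{del-ver-ln} of Algorithm~\ref{propag}), and by Lemma~\ref{naiveproof} the departure times at which $v$ is reached are non-decreasing, so $v$ acts as a fresh source only $O(1)$ times. This yields $O(n)$ point-wavelet sources and $O(n)$ segment wavelets, the latter likewise spawning $O(1)$ wavelets per propagation chained $O(n)$ deep along their edge (again by a potential argument on the number of edges still ``ahead'' of the segment). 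Combining, the total number of wavelets is $O(n)\cdot O(n)=O(n^2)$, and the running time is $O(n^2\log n)$. The main obstacle is precisely this combinatorial bookkeeping: ruling out an exponential blow-up of the recursive splits (which rests on the emptiness of the corner sub-rectangle of $\Gamma(\wv)$) and ruling out a source or an edge being ``re-used'' too many times (which rests on the vertex deletions together with the monotonicity of departure times guaranteed by Lemma~\ref{naiveproof}).
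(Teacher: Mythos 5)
Your accounting of the per-propagation cost ($O(\log n)$ by Lemmas \ref{spqp}, \ref{stop-point-query-segment}, \ref{queryR}) and your potential argument for same-source split chains (the corner rectangle of $\Gamma(\wv)$ on the side of the source is empty of vertices, hence $\Phi(\wv_1)+\Phi(\wv_2)\le\Phi(\wv)-1$ and each source--direction pair spawns $O(n)$ splits) are sound, and this is a genuinely different route from the paper. The gap is in your final counting step, the claim of ``$O(n)$ point-wavelet sources and $O(n)$ segment wavelets.'' The endpoints of accessible segments (the points $q_1,p_1,q_2,p_2$ of Figure \ref{point-prop}(a)) are in general \emph{not} vertices of $V$; they are interior points of edges that depend on the search region of the discovering wavelet, so distinct wavelets hitting the same edge produce distinct endpoint sources. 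Each point-wavelet propagation may create up to eight of them, so their number is proportional to the number of point wavelets --- precisely the quantity you are trying to bound --- which makes the source count circular. Likewise, a vertex $v$ is deleted from $V$ only when a wavelet \emph{originating at} $v$ is extracted from the queue (line \ref{del-ver-ln} of Algorithm \ref{propag}); before that moment $v$ can be returned as $\Gamma(\wv)$ by many distinct wavelets whose departure times precede the arrival time at $v$, and the appeal to Lemma \ref{naiveproof} about non-decreasing departure times does not yield the ``$O(1)$ fresh sources per vertex'' you assert. Your claim of $O(n)$ segment wavelets in total also conflicts with the paper's own remark at the start of Section \ref{eff-sec} that up to $O(n)$ wavelets may originate from a \emph{single} edge (the very motivation for the Expanding step), i.e., the operative bound there is quadratic, not linear.

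The paper avoids all of this bookkeeping: it cites Fujimura's result that at any instant the wavefront consists of $O(n)$ wavelets, so each of the $n$ edges is hit by $O(n)$ wavelets and consequently generates $O(n)$ new ones, giving $O(n^2)$ wavelets overall; multiplying by the $O(\log n)$ cost per propagation gives $O(n^2\log n)$. What your write-up is missing is exactly such a global bound (a per-edge $O(n)$ or per-vertex $O(1)$ bound on the number of hits); either import the wavefront-size bound as the paper does, or supply a non-circular charging argument for the number of wavelet sources --- the split-tree potential alone does not provide it.
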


\begin{proof}
 We estimate the total number of calls to the functions $PropagatePoint(\wv)$ and $PropagateSegment(\wv)$. More precisely, we need to bound the total number of wavelets created in the process. Fujimora \cite{fuji1} proved that at any given time, the size of the wavefront (i.e., the number of wavelets in the priority queue) is $O(n)$. Thus, each edge may be hit by $O(n)$ wavelets and consequently generate $O(n)$ new wavelets. This means that the total number of wavelets is bounded by $O(n^2)$. Recall that, each propagation
 (i.e., Algorithms \ref{propag} or \ref{propag2}) 
 can be executed in optimal $O(\log n)$ time. Therefore, Algorithm \ref{naive-alg} runs in $O(n^2 \log n)$ time.
\end{proof}
\section{An Improved Algorithm} \label{eff-sec}

As we proved in Lemma \ref{timepropag}, there may be up to $O(n)$ wavelets originating from a single edge. In Section \ref{expand}, we will utilize a method called 
``expanding'', to reduce the total number of segment propagations. Note though that the queue may contain wavelets with overlapping search regions. Thus, each vertex may be hit by $O(n)$ wavelets (the maximum size of the queue at any given time). To prevent this from happening, in Section \ref{narrow}, we propose a procedure called ``Narrowing'', which shrinks the overlapping search regions, so that they do not sweep the same area.



\subsection{Wavelet expanding} \label{expand}
 
Let $\mathcal{W}_i$ be a maximal set of wavelets, originating from the body of $E_i$. Recall that, the naive algorithm propagates every wavelet in $\mathcal{W}_i$ individually. In Lemma \ref{timepropag}, we proved that the number of these wavelets in the priority queue will be quadratic in the worst case.
In this section, we propose an alternative approach in which, we replace all wavelets in $\mathcal{W}_i$ by a single ``expanded'' wavelet. This wavelet is a segment wavelet whose source is the body of the edge $E_i$. We will prove that 
this replacement of wavelets permits avoiding the
quadratic number of propagations. The crucial property that we are employing is the following:

\begin{observation}\label{wi}
The wavelets in $\mathcal{W}_i$ simulate the robot's motions when: (1) it arrives at the body of the edge $E_i$ in its existence time interval and (2) departs from the edge, at time $T_i^d$. In other words, the departure time of all wavelets in $\mathcal{W}_i$ is the disappearance time of $E_i$. 
\end{observation}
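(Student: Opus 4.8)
The plan is to establish Observation~\ref{wi} essentially by unwinding the definitions of $\mathcal{W}_i$, of segment sources, and of the propagation rules, rather than by any intricate argument. First I would recall that by Definition~\ref{ws} a wavelet $\wv \in \mathcal{W}_i$ originating from the body of $E_i$ is a maximal set of points $p$ for which $E_i$ is the last element of $S(\pi(s,p))$; equivalently, $\pi(s,p)$ has its last wait point on $E_i$. Since the robot is waiting on $E_i$, its motion on that final wait interval can only terminate when $E_i$ ceases to be an obstacle, i.e. at time $T_i^d$ — otherwise the robot gains nothing by waiting and the path would not be time-minimal (this is exactly the reasoning already used to justify that a time-minimal path alternates only between speed $\vr$ and $0$, and to justify Observation~\ref{prepan-obs}). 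Hence the final leg of $\pi(s,p)$ departs from a point of $E_i$ precisely at time $T_i^d$, which is claim~(2), and claim~(1) is immediate: for the robot to be able to wait on $E_i$ at all it must have reached the body of $E_i$ at some time in $[T_i^a, T_i^d]$.

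Next I would make the connection to how such wavelets are actually created in the algorithm, to confirm that the informal statement matches the objects in the priority queue. In Algorithm~\ref{propag} (line~\ref{seg-ln}) and Algorithm~\ref{propag2} (line~\ref{add-seg-ln}), every wavelet whose source lies on the body of $E_i$ is added with departure time exactly $T_i^d$ — it is constructed as $\wv((e,T_i^d),T_i^d,\range^e)$ from an accessible sub-segment $e$ of $E_i$, or as $\wv'((e',T_i^d),T_i^d,\range')$ from the north/south/east/west stop segment $e' \subseteq E_i$. Thus, syntactically, the departure-time field of every member of $\mathcal{W}_i$ equals $T_i^d$, which is the second sentence of the observation; and the search region of each such wavelet is, by construction, the region swept when the robot leaves $E_i$ perpendicularly (Observation~\ref{prepan-obs}) after the wait, which is the first sentence.

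The one point that needs a little care — and which I expect to be the main (though still minor) obstacle — is the ``maximal set'' phrasing: I should check that $\mathcal{W}_i$ as defined (a maximal set of wavelets originating from the body of $E_i$) is well-defined and that every wavelet in it genuinely has departure time $T_i^d$, including wavelets that may have been split or narrowed by later processing. For this I would observe that splitting a segment wavelet (the sub-wavelets $\wv_1,\wv_2$ in Algorithm~\ref{propag2}) produces wavelets inheriting the same departure time, and that narrowing (Section~\ref{narrow}) only shrinks search regions and does not alter departure times or origins; so the invariant ``origin on $E_i$ $\Rightarrow$ departure time $= T_i^d$'' is preserved by all operations the algorithm performs. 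Assembling these remarks gives the observation.

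Since Observation~\ref{wi} is stated without proof in the source excerpt (it is a structural remark feeding into the ``expanding'' construction), I would present the above as a short justifying paragraph rather than a formal proof; the essential content is that waiting on a transient edge is useful only until the edge disappears, so all paths through $E_i$'s body share the single departure time $T_i^d$, which is what makes it legitimate to coalesce all of $\mathcal{W}_i$ into one expanded segment wavelet.
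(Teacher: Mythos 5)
Your proposal is correct and follows essentially the same (implicit) justification the paper relies on: the observation is stated without proof there, resting on Observation~\ref{prepan-obs} and on the fact that every wavelet whose source lies on the body of $E_i$ is created with departure time $T_i^d$ in lines~\ref{seg-ln} and \ref{add-seg-ln} of the propagation algorithms, which is exactly what you unwind. Your extra check that splitting and narrowing preserve the departure time is a sensible verification but not a departure from the paper's reasoning.
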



W.l.o.g, we assume $E_i$ is a horizontal edge. 
Let $\wv(\qp {=} (v, t), t, r)$ be a wavelet originating from a vertex $v$, propagating north-east. 
Suppose $\wv$ hits the interior of edge $E_i$, i.e., $\spp(\qp, N) \in E_i$.
Thus, $PropagatePoint(\wv)$ creates some wavelet(s) originating from the body of $E_i$ (for details see Algorithm \ref{propag}). The same process will be repeated for the newly generated wavelets, at a later time. 
Notice that the sequence at which these wavelets are created is sorted by their departure times. Thus, for each wavelet $\wv'$ originating from the body of an edge, we can find a sequence of wavelets, starting with the wavelet $\wv$ originating from a vertex, which led to the creation $\wv'$. We say $\qp$ is the \textit{root point source} of $\wv'$. Since $E_i$ is horizontal, we observe the following property of the root point sources.

\begin{property}\label{prop-bi}
Let $\wv(\qs, t, \range) \in \mathcal{W}_i$ be a segment wavelet where $\qs((X_1, Y),(X_2, Y))$. Then, there is an associated root point source $\qp {=} (v, t)$ for which: (a) $v{=}(X_v, Y_v)$ is a vertex in $V$ and (b) $X_1 \leq X_v \leq X_2$, i.e., $\qp$ is located south of $\qs$.
\end{property}

We store the root point sources of the wavelets of $\mathcal{W}_i$ in a binary search tree (BST) $B_i$. If $E_i$ is horizontal, $B_i$ is sorted by the $x$-coordinates of the point sources; and if $E_i$ is vertical, $B_i$ is sorted by their $y$-coordinates. 
Next, we use a method, called $Expand$, as follows: remove the wavelets of $\mathcal{W}_i$ from the queue and replace them with an expanded wavelet $\wv_i((E_i, T^d_i), T^d_i, r_i)$, where $r_i$ is the segment search region of $E_i$. The details of this procedure is presented in Algorithm \ref{expanding}. In this method, we update the binary trees as follows: 
 when $\wv_i$ hits its north stop segment on the edge $E_j$, we construct $B_j$ by applying the appropriate \textit{Split} and \textit{Merge} operations on $B_i$ and $B_j$. 
We define these two functions as follows.
\begin{itemize}[noitemsep,wide=0pt, leftmargin=\dimexpr\labelwidth + 2\labelsep\relax]
	\item[$\bullet$] $Split(T, x):BST \times {\rm I\!R} \rightarrow BST \times BST$. Given a BST $T$ and a key value $x$, $split$ divides $T$ into two BSTs $T_l$ and $T_r$, where $T_l$ consists of all point sources in $T$ with $x$-coordinates less than $x$; and $T_r$ includes the rest of the point sources. 
	\item[$\bullet$] $Merge(T_l, T_r):BST \times BST \rightarrow BST$. Let $T_l$ and $T_r$ be two BSTs, where there exist a value $x$ such that the point sources in $T_l$ have lower (or equal) $x$-coordinates than $x$ and the point sources in $T_r$ have greater (or equal) $x$-coordinates than $x$. Function $Merge$ creates a new BST which is the union of $T_l$ and $T_r$.
\end{itemize}

\begin{algorithm}[H]
\caption{$Expand(\wv(q, T^d_i, r))$} \label{expanding}
\begin{description}
\item {\bf Input}: A point wavelet $\wv \in \mathcal{W}_i$.
\item {\bf Output}: Replaces the wavelets of $\mathcal{W}_i$ by an expanded wavelet in the queue.
\end{description}
\begin{algorithmic}[1]
\STATE w.l.o.g., assume $E_i$ is horizontal
\STATE let $B_i$ be an empty binary search tree \par \COMMENT{ $B_i$ stores the root point sources of the wavelets in $\mathcal{W}_i$}
\FOR {each $\wv'(q', T^d_i, r') \in \mathcal{W}_i$}
 \STATE extract $\wv'$ from the queue 
  \IF {$\wv'$ is not an expanded wavelet} 
      \STATE insert the root point source of $\wv'$ into $B_i$
  \ENDIF
  \IF {$\wv'$ is an expanded wavelet of $E_j$}
   \STATE let $q' = (((X_l, Y),(X_r, Y)), T^d_j)$
      \STATE $T_l, T_r = Split(B_j, X_r)$
      \STATE $T'_l, T'_r = Split(T_l, X_l)$ 
      \STATE $B_i = Merge(B_i, T'_r)$
      \STATE $B_j = Merge(T_r, T'_l)$
  \ENDIF
\ENDFOR
\STATE let $r_i$ be a segment search region of $E_i$
\STATE add $\wv_i((E_i, T^d_i), T^d_i, r_i)$ to the queue \par \COMMENT{ the expanded wavelet is added instead of the wavelets originating from $E_i$}
\end{algorithmic}
\end{algorithm}

In Algorithm \ref{expanding} (i.e., $Expand(\wv)$), the input is a wavelet $\wv$ originating from the body of an edge $E_i$. The algorithm proceeds by initializing an empty BST $B_i$. Next, for any wavelet $\wv'$ originating from the body of $E_i$, it first removes $\wv'$ from the queue. Then, it considers two cases: (1) if $\wv'$ is a non-expanded wavelet, it inserts the root point source of $\wv'$ into $B_i$ and (2) if $\wv'$ is an expanded wavelet of edge $E_j=((X_l, Y), (X_r, Y))$, the algorithm first splits $B_j$ into two sub-trees $T_l, T_r = Split(B_j, X_r)$. Then, it splits $T_l$ again, such that $T'_l, T'_r = Split(T_l, X_l)$. At this point, $T'_r$ represents the point sources in $B_j$ with $x$-coordinates between $X_l$ and $X_r$. 
The algorithm merges $T'_r$ with $B_i$ using $B_i = Merge(T'_r, B_i)$.
The rest of the point sources in $T_r$ and $T'_l$ are maintained in $B_j$ using $B_j = Merge(T_r, T'_l)$.

Since updating the binary search trees using the basic operations of $merge$ and $split$, can be done in $O(\log n)$ time, each iteration in the main loop of the Algorithm \ref{expanding} runs in $O(\log n)$ time. 


Now, we modify the naive algorithm described in Section \ref{nai-sec} so that it uses the ``expanding'' method. Let $\wv$ be a wavelet extracted from the priority queue (see Line \ref{extract-ln} of the Algorithm \ref{naive-alg}). If $\wv$ is originating from the body of $E_i$,
we execute the Algorithm \ref{expanding} to replace the wavelets of $\mathcal{W}_i$ with an expanded wavelet. More precisely, if $\wv$ is originating from the body of $E_i$, we execute $Expand(\wv)$ right after line \ref{extract-ln} of the Algorithm \ref{naive-alg}. We call this new algorithm \textit{Expanding algorithm}. 

W.l.o.g., assume the expanded wavelet $\wv_i$ is propagating north and $v{=}(X_v, Y_v)$ is the first vertex that it hits.
Although $PropagateSegment(\wv_i)$ identifies $v$, it may not report the arrival time at $v$. This is due to the fact that the  wavelets on the body of $E_i$ have been replaced by a single wavelet $\wv_i$. Thus, we need an alternative approach to calculate the minimum arrival time at $v$.
Suppose $\wv_{min}$ is the first wavelet in $\mathcal{W}_i$ that hits $v$ (see Figure \ref{w-expanding}). In the remaining of this section, we show how to determine the minimum arrival time at $v$, without explicitly calculating the wavelet $\wv_{min}$. Recall that, the minimum $L_1$ distance from $u$ to $v$ is denoted by $\norm{uv}$, and the minimum $L_1$ distance between $E_i$ and $v$ is denoted by $\norm{E_iv}$.

\begin{figure}[t]
\centering
\includegraphics[width=4in]{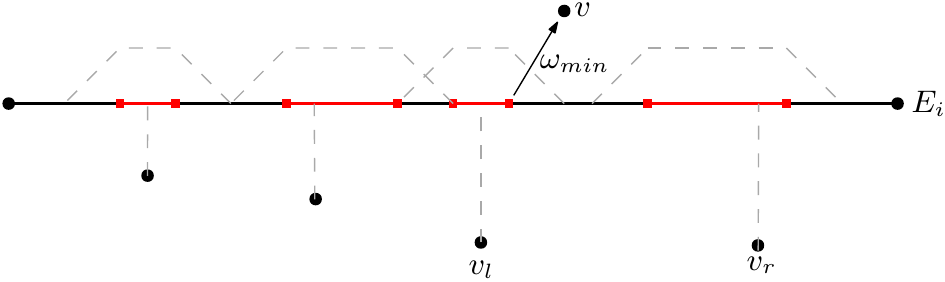}
\caption{The segment sources and point sources of an edge $E_i$ are illustrated by red segments and red squares, respectively. Among the wavelets originating from these sources (i.e., the wavelets in $\mathcal{W}_i$), $\wv_{min}$ is the first wavelet that hits vertex $v$.}\label{w-expanding}
\end{figure}

\begin{lemma}\label{seg-tree} 
Let $\qp(u, t) \in B_i$. If the robot departs from $u$ at time $t$, the minimum arrival time at $v$ is $T_v(u) = max\big(T_i^d , t + {{\norm{uv} - \norm{E_iv}} \over \vr}\big) + {\norm{E_iv} \over \vr}$.
\end{lemma}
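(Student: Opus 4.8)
The plan is to identify $T_v(u)$ as the minimum, over all wait points $p$ on $E_i$, of the arrival time at $v$ of the canonical path $\pi_p$ that travels monotonically from $u$ to $p$, waits on $E_i$ until $T_i^d$, then leaves $p$ perpendicularly to $E_i$ and travels monotonically on to $v$; I then prove this minimum equals the stated closed form via a matching upper and lower bound.

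First I would pin down the shape of $\pi_p$ and write its arrival time at $v$ as an explicit function of $p$. Since $\qp(u,t)\in B_i$, the wavelet chain rooted at $\qp(u,t)$ that produced a wavelet of $\mathcal{W}_i$ reaches $E_i$ without an intervening wait, so by Lemma \ref{monotone} the portion of $\pi_p$ from $u$ to $p$ is monotone and has $L_1$-length $\norm{up}$. By Observation \ref{prepan-obs} the robot leaves $p$ perpendicularly to $E_i$, i.e. northward; and since $v$ is the first vertex hit by $\wv_i$ as $\wv_i$ propagates north, $v$ lies north of $E_i$, so the rest of $\pi_p$ can be taken monotone of length $\norm{pv}$. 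Accounting for the (possibly vacuous) wait, the arrival time of $\pi_p$ at $v$ is $g(p)=\max\!\big(T_i^d,\, t+\tfrac{\norm{up}}{\vr}\big)+\tfrac{\norm{pv}}{\vr}$, where the $\max$ uniformly covers both the case in which the robot genuinely waits on $E_i$ and the case in which it reaches $p$ only after $E_i$ has disappeared. It then remains to prove $\min_{p\in E_i}g(p)=T_v(u)$.

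For the upper bound I would evaluate $g$ at the point $p^{*}\in E_i$ closest to $v$, so that $\norm{p^{*}v}=\norm{E_iv}$. The crucial input is Property \ref{prop-bi}: because $u$ lies within the $x$-span $[X_1,X_2]$ of the horizontal edge $E_i$ while $v$ lies north of it, a short case check on whether $X_v\in[X_1,X_2]$ shows $p^{*}$ lies on a monotone shortest path from $u$ to $v$, whence $\norm{up^{*}}=\norm{uv}-\norm{E_iv}$; substituting into $g(p^{*})$ gives precisely $T_v(u)$. For the lower bound, rewrite $T_v(u)=\max\!\big(T_i^d+\tfrac{\norm{E_iv}}{\vr},\, t+\tfrac{\norm{uv}}{\vr}\big)$; it then suffices to show, for every $p\in E_i$, that $g(p)\ge t+\tfrac{\norm{uv}}{\vr}$ and $g(p)\ge T_i^d+\tfrac{\norm{E_iv}}{\vr}$. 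The first follows from $g(p)\ge\big(t+\tfrac{\norm{up}}{\vr}\big)+\tfrac{\norm{pv}}{\vr}\ge t+\tfrac{\norm{uv}}{\vr}$ by the triangle inequality $\norm{up}+\norm{pv}\ge\norm{uv}$; for the second, whether the $\max$ in $g(p)$ equals $T_i^d$ or equals $t+\tfrac{\norm{up}}{\vr}>T_i^d$, in both cases $g(p)\ge T_i^d+\tfrac{\norm{pv}}{\vr}\ge T_i^d+\tfrac{\norm{E_iv}}{\vr}$ since $\norm{pv}\ge\norm{E_iv}$. Combining the two bounds yields the lemma.

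The triangle-inequality estimates are routine; the delicate part is the reduction to $g$ — justifying that the $u$-to-$p$ sub-path is monotone (so that its length is exactly $\norm{up}$) and that $v$ lies strictly north of $E_i$ — together with the decomposition $\norm{up^{*}}=\norm{uv}-\norm{E_iv}$, which is exactly where Property \ref{prop-bi} (the root point source lies south of, and within the $x$-range of, $E_i$) is used.
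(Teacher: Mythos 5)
Your proposal is correct and arrives at the stated formula, but it is packaged differently from the paper's argument. The paper reasons directly about an assumed time-minimal path $\pi(u,v)$: using Observation \ref{prepan-obs} together with the standing assumption that $v$ is hit by $\wv_{min}\in\mathcal{W}_i$ (via Observation \ref{wi}), it notes that $\pi(u,v)$ either has no wait point or waits exactly at the point $p'$ of $E_i$ with the same $x$-coordinate as $v$, computes the two resulting arrival times $t+\norm{uv}/\vr$ and $T_i^d+\norm{E_iv}/\vr$, and shows that comparing $t'=t+(\norm{uv}-\norm{E_iv})/\vr$ with $T_i^d$ decides which case occurs, which is precisely the $\max$ in the statement. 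You instead parameterize candidate paths by their crossing point $p\in E_i$, introduce the value function $g(p)=\max\big(T_i^d,\,t+\norm{up}/\vr\big)+\norm{pv}/\vr$, and prove $\min_{p}g(p)$ equals the closed form by evaluating at the point $p^*$ of $E_i$ nearest $v$ (your identity $\norm{up^*}=\norm{uv}-\norm{E_iv}$, justified through Property \ref{prop-bi}, is exactly what the paper uses implicitly when it writes $t'$) together with a triangle-inequality lower bound. Your version makes the optimality of the choice of wait point an explicit optimization, where the paper gets it from the perpendicular-departure observation; on the other hand, your lower bound quantifies only over the canonical family, so the fact that no path detouring around an endpoint of $E_i$, or waiting on a different edge, can beat $T_v(u)$ is still carried by the contextual assumption that $v$ is first reached through $\mathcal{W}_i$ --- the same assumption the paper invokes through Observation \ref{wi} and its remark that $\pi(u,v)$ either has no wait points or waits on $E_i$ --- so on that point you are at the same level of rigor as the paper rather than below it.
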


\begin{proof} Let $\pi(u, v)$ be a time-minimal path from $u$ to $v=(X_v, Y_v)$, where the robot leaves $u$ at time $t$. Now, consider two cases:

\noindent \textbf{Case 1.} If $\pi(u, v)$ contains no wait points, then it is easily seen that
the robot moves with maximum speed (without waiting) along the path $\pi(u, v)$. Recalling that $\pi(u, v)$ is a $L_1$ monotone path (by the monotonicity property), the arrival time at $v$ is $T_v(u) = t + {{\norm{uv}} \over \vr}$.

\noindent  \textbf{Case 2.} Let $\pi(u, v)$ has a wait point $p{=}(X_v, Y_p)$ on some (horizontal) edge $E_j$. Then, by Observation \ref{prepan-obs}, the robot leaves $p$ at time $T^d_j$ and moves north (i.e., perpendicularly to $E_j$), towards $v$.
 Recall that, our underlying assumption is that $v$ is hit by a wavelet $\wv_{min} \in \mathcal{W}_i$. Thus, by Observation \ref{wi}, a time minimal path from $u$ to $v$ arrives at $E_i$ in its existence time interval. Hence, the robot arrives at some point $p'{=}(X_v, Y_{p'}) \in E_i$ before $T_i^d$. Note that the $X$-coordinate of $p$ is equal to the $X$-coordinate of $v$. So, since the path $\pi(p', v)$ is oriented perpendicular to $E_i$, the robot has to wait at $p'$ until time $T_i^d$. There is a crucial observation to make here:  $\pi(u, v)$ either has no wait points, or has a wait point on $E_i$.
  Therefore, when $\pi(u, v)$ has a wait point, the minimum arrival time at $v$ is $T_v(u) = T_i^d + {{\norm{E_iv}} \over \vr}$.

When there are no wait points before arriving at $p'$, observe that the minimum arrival time at $p'$ is $t' = t + {{\norm{uv} - \norm{E_iv}} \over \vr}$. In the following, we show how to determine if $\pi(u, v)$ has a wait point on edge $E_i$, by solely comparing the values of $t'$ and $T_i^d$. If $t' > T_i^d$, it means the robot cannot arrive at $p'$ before the disappearance of the edge. Since $p'$ is the only designated wait point on $E_i$, we conclude that $\pi(u, v)$ contains no wait points. So, by Case 1, $T_v(u) = t + {{\norm{uv}} \over \vr}$. Conversely, if $t' < T_i^d$, this implies that even if there are no wait points before $p'$, the robot has to wait at $p'$ until time $T_i^d$. This is due to the fact that the underlying assumption is that $v$ is hit by a wavelet $\wv_{min} \in \mathcal{W}_i$.
 Hence, by Case 2, the arrival time is $T_i^d + {{\norm{E_iv}} \over \vr}$. Therefore, the minimum arrival time at $v$ is $T_v(u) = max\big(T_i^d , t'\big) + {\norm{E_iv} \over \vr}$, in which $T_v(u) = t + {{\norm{uv}} \over \vr}$ if $t' > T_i^d$ and $T_v(u) = T_i^d + {{\norm{E_iv}} \over \vr}$, otherwise. 
\end{proof}


 Let $\qp_l=(v_l, t_l)$ be the point source in $B_i$, where $v_l$ is a vertex with the largest $X$-coordinate smaller than $X_v$. For an example, see Figure \ref{w-expanding}. 
 Analogously, let $\qp_r=(v_r, t_r)$, where $v_r$ has the smallest $X$-coordinate greater than $X_v$.
Using these notations, we state the following lemma.

\begin{lemma}\label{expand-proof} 
Among the point sources in $B_i$, either point source $\qp_r$ or $\qp_l$ (defined above) has the minimum arrival time at $v$.
\end{lemma}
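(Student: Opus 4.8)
The plan is to peel off the geometry hidden in Lemma~\ref{seg-tree}, reduce the statement to a claim about arrival times \emph{along} the edge $E_i$, and then read off the answer from the continuous-Dijkstra ``ownership'' structure of $E_i$.

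First I would rewrite $T_v(u)$. Since $v$ is the first vertex the expanded wavelet $\wv_i$ meets while moving north, $v$ lies north of $E_i$ and inside the $x$-span of $E_i$; hence $\norm{E_iv}$ is the (constant) vertical gap $Y_v-Y$, and for the point $p'=(X_v,Y)\in E_i$ and every root $(u,t_u)\in B_i$, which by Property~\ref{prop-bi} lies strictly south of $E_i$, one checks $\norm{uv}-\norm{E_iv}=\norm{up'}$. Substituting into Lemma~\ref{seg-tree} gives $T_v(u)=\tfrac{\norm{E_iv}}{\vr}+\max\!\big(T_i^d,\ t_u+\tfrac{\norm{up'}}{\vr}\big)$. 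As $x\mapsto\max(T_i^d,x)$ is nondecreasing, a minimizer over $B_i$ of $\beta(u):=t_u+\tfrac{\norm{up'}}{\vr}$ is automatically a minimizer of $T_v$, so it suffices to prove that $\beta$ attains its minimum over $B_i$ at $\qp_l$ or $\qp_r$.

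Next I would interpret $\beta$ geometrically. Writing $\alpha_u(x):=t_u+\tfrac{|x-X_u|+(Y-Y_u)}{\vr}$ — a V-shaped function with apex at $x=X_u$ and slopes $\pm1/\vr$ — we have $\beta(u)=\alpha_u(X_v)$, and $\alpha_u(x)$ is an achievable time at which the wavefront rooted at $u$ reaches $(x,Y)\in E_i$ (travel straight north from $u$ to $E_i$, which by Property~\ref{prop-bi} meets $E_i$, then slide along $E_i$). General position excludes $X_u=X_v$, so split $B_i$ into the roots with $X_u<X_v$ and those with $X_u>X_v$ and treat the symmetric cases; take the left one, order the left roots by abscissa, and recall $\qp_l$ is the last of them. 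For two consecutive left roots $u,u'$ with $X_u<X_{u'}\le X_v$, Property~\ref{prop-bi} places $X_{u'}$ inside the portion of $E_i$ that $u'$'s wavelet owns, so the minimum arrival time along $E_i$ at $(X_{u'},Y)$ is $\alpha_{u'}(X_{u'})$, whence $\alpha_{u'}(X_{u'})\le\alpha_u(X_{u'})$; since both $\alpha_u$ and $\alpha_{u'}$ rise with slope $1/\vr$ throughout $[X_{u'},X_v]$, this inequality is preserved at $X_v$, i.e.\ $\beta(u')\le\beta(u)$. Iterating over consecutive pairs shows $\beta$ is minimized over the left roots at $\qp_l$; symmetrically, using slope $-1/\vr$, it is minimized over the right roots at $\qp_r$. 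Therefore $\min_{u\in B_i}\beta(u)=\min(\beta(\qp_l),\beta(\qp_r))$, and by the reduction above $\min_{u\in B_i}T_v(u)=\min(T_v(\qp_l),T_v(\qp_r))$, which is the claim.

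The step I expect to be the real obstacle is the structural input used in the third paragraph: that the accessible segments the roots of $B_i$ carve on $E_i$ are pairwise non-overlapping, so each root $u$ genuinely realizes the minimum arrival value at its own abscissa $X_u$. This is precisely what the narrowing of Section~\ref{narrow} enforces, and it is the one place where the argument uses more than Lemma~\ref{seg-tree} and Property~\ref{prop-bi}. The remaining edge cases are harmless and I would dispatch them briefly: when $p'$ coincides with the shared endpoint of two adjacent accessible segments, $\qp_l$ and $\qp_r$ simply tie for the minimum, and when one of $\qp_l,\qp_r$ fails to exist the corresponding side contributes nothing to the minimization.
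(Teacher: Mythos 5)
Your reduction is consistent with Lemma \ref{seg-tree}: since every root in $B_i$ lies south of $E_i$, $\norm{uv}-\norm{E_iv}=\norm{up'}$, and because $x\mapsto\max(T_i^d,x)$ is nondecreasing it indeed suffices to minimize $\beta(u)=t_u+\norm{up'}/\vr$ over $B_i$. The genuine gap is exactly the step you flag in your third paragraph: the inequality $\alpha_{u'}(X_{u'})\le\alpha_u(X_{u'})$ is not available from what you cite. Property \ref{prop-bi} only places the abscissa of a root inside the $x$-span of its wavelet's segment source; it does not say that this root realizes the minimum arrival time at the point of $E_i$ above it. The accessible segments of distinct wavelets of $\mathcal{W}_i$ can overlap (the paper's own proof explicitly merges wavelets of $\mathcal{W}_i$ with intersecting segment sources, rather than asserting any ownership), and a point of $E_i$ directly above $u'$ may well be reached earlier through a farther root $u$, e.g.\ when the vertex $u'$ itself is only reached after a detour or a wait while the strip between $u$ and $E_i$ is unobstructed. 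Nor can the missing fact be borrowed from Section \ref{narrow}: Lemma \ref{expand-proof} is needed to establish correctness of the Expanding algorithm (Corollary \ref{col-2}), which precedes narrowing; and even in the Narrowing algorithm, narrowing only prunes the \emph{future} search region of a non-dominant wavelet when two wavelets meet at a vertex, so it never certifies the per-point minimality on $E_i$ that your chain of inequalities needs.

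The paper closes this hole by a mechanism your argument never invokes, namely Observation \ref{wi}: all wavelets of $\mathcal{W}_i$ share the departure time $T_i^d$, so when comparing them at $v$ the roots' arrival times at $E_i$ are irrelevant; after grouping intersecting sources into disjoint ones, the first wavelet of $\mathcal{W}_i$ to hit $v$ is simply the one whose source segment is closest to $X_v$, and since by Property \ref{prop-bi} the left-to-right order of the sources on $E_i$ agrees with the order of their roots in $B_i$, that wavelet is rooted at $\qp_l$ or $\qp_r$. If you wish to keep your $\beta$-formulation, replace the ownership claim by this simultaneity argument: every $u\in B_i$ reaches its stop point on $E_i$ no later than $T_i^d$, so whenever waiting occurs all roots yield the same value $T_i^d+\norm{E_iv}/\vr$, and the only comparison that matters is resolved by distance to $X_v$ along $E_i$, not by the order in which the roots reached $E_i$.
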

\begin{proof} The proof is based on a simple observation: when the robot has arrived at $E_i$ (before time $T^d_i$), it moves on the edge, in a direction that would get closer to $v$. This will minimize the arrival time at $v$ after the disappearance of the obstacle. Thus, at time $T_i^d$, a wavelet in $\wv_{min} \in \mathcal{W}_i$ which has the closest distance to $v$ is the candidate for a time-minimal path. 

 By Observation \ref{wi}, the departure times of the wavelets of $\mathcal{W}_i$ are equal.
So,
 we can group every 
pair of segment wavelets 
in  $\mathcal{W}_i$ which have intersecting 
segment sources into one (longer) segment wavelet.
 So, w.l.o.g., we assume the segment wavelets in $\mathcal{W}_i$ have disjoint segment sources (see Figure \ref{w-expanding}). Hence, we can sort the wavelets in $\mathcal{W}_i$ based on the $x$-coordinates of their sources. 

 By Property \ref{prop-bi}, each segment wavelet $\wv(\qs, t, \range) \in \mathcal{W}_i$ has associated with it a root point source which is located south of $\qs$. Thus, it is easily seen that the order of the wavelets in $\mathcal{W}_i$ on edge $E_i$, is the same order as of their corresponding root point sources in $B_i$ (based on their $x$-coordinates). Finally, we observe that either $\qp_l$ or $\qp_r$ is the root source point for $\wv_{min}$ and therefore, either $T_v(v_l)$ or $T_v(v_r)$ has the minimum value.
\end{proof}

Finally, by Lemmas \ref{seg-tree} and \ref{expand-proof}, when a vertex $v$ is discovered by an expanded wavelet in Algorithm \ref{propag2}, we can find the minimum arrival time at $v$ using $t_v = min(T_v(v_r), T_v(v_l))$. Since the calculation of $t_v$ is solely based on $E_i$, $u$ and $v$, it is not required to calculate the wavelets in $\mathcal{W}_i$.
 Therefore, the following corollary is obtained.

\begin{corollary}\label{col-2} 
The Expanding algorithm calculates the minimum arrival time at the destination.
\end{corollary}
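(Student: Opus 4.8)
The plan is to reduce correctness of the Expanding algorithm to that of the naive algorithm, already established in Lemma \ref{naiveproof}. Concretely, I would re-run the induction of Lemma \ref{naiveproof} on $|S|$ (the set of vertices whose wavelets have been extracted) and show that, when a vertex $v$'s wavelet is extracted, the value recorded is still the minimum arrival time at $v$; correctness at $d$ then follows exactly as in Lemma \ref{naiveproof}, since the arrival time at $d$ is reported the moment a point wavelet from $d$ is extracted. The two algorithms differ in a single respect: whenever a wavelet originating from the body of an edge $E_i$ is extracted, $Expand$ removes the whole family $\mathcal{W}_i$ from the queue and replaces it by one expanded segment wavelet $\wv_i$, recording the root point sources of $\mathcal{W}_i$ in the tree $B_i$; and when $\wv_i$ later discovers a vertex $v$ (Algorithm \ref{propag2}), the arrival time at $v$ is recovered as $t_v = \min(T_v(v_l),T_v(v_r))$ rather than read off an individual wavelet. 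Thus I only have to re-examine the part of the induction that involves such an edge.

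First I would check that the collapse is harmless for everything \emph{other} than the vertices hit directly by $\wv_i$. By Observation \ref{wi}, every member of $\mathcal{W}_i$ shares the departure time $T_i^d$; hence all members are created by propagating wavelets that hit the body of $E_i$ during $(T_i^a,T_i^d)$, i.e. wavelets with key below $T_i^d$, so they are all present in the queue at the time $T_i^d$ at which $Expand$ is first triggered and are genuinely all removed. Because they depart simultaneously, the single wavelet $\wv_i$, swept over the width-$|E_i|$ strip, produces on every downstream edge exactly the union of the stop segments, and at their endpoints exactly the point wavelets, that the members of $\mathcal{W}_i$ would have produced collectively; so the subsequent behaviour of the wavefront on the other edges and their vertices is unchanged. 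I would also argue that $B_i$ stays synchronised with $\mathcal{W}_i$: a member of $\mathcal{W}_i$ spawned directly by a point wavelet contributes its unique root point source, inserted in the main loop of Algorithm \ref{expanding} and located below its segment source by Property \ref{prop-bi}; a member descending from an already-expanded wavelet $\wv_j$ of another edge $E_j$ contributes those root point sources of $B_j$ whose descendant reached the stop segment $[X_l,X_r]$ on $E_i$, and since a segment wavelet propagates axis-perpendicularly (Observation \ref{prepan-obs}) these are exactly the ones with $x$-coordinate in $[X_l,X_r]$, which is precisely the sub-tree moved from $B_j$ to $B_i$ by the two $Split$ calls and the $Merge$. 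As $Split$ and $Merge$ are correct BST operations, the invariant is maintained, and it is handed on to $B_j$ when $\wv_i$ next hits $E_j$.

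It then remains to treat the one changed case of the induction. If the subpath $\pi(u,v)$ between consecutive vertices contains no wait point, the argument is verbatim that of case ($i$) of Lemma \ref{naiveproof} (no expansion is involved). If $\pi(u,v)$ waits on some edge, let $E_i$ be the last waited edge; in the naive algorithm $v$ is then reached by the first wavelet $\wv_{min}\in\mathcal{W}_i$ that hits $v$, with the minimal arrival time identified in case ($ii$) of Lemma \ref{naiveproof}. In the Expanding algorithm $v$ is instead discovered while propagating $\wv_i$; by the $B_i$-invariant the root point source of $\wv_{min}$ lies in $B_i$, by Lemma \ref{expand-proof} it is $\qp_l$ or $\qp_r$, and by Lemma \ref{seg-tree} the quantity $\min(T_v(v_l),T_v(v_r))$ equals the arrival time of $\wv_{min}$ at $v$. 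Hence $v$ receives the same correct value, the induction goes through, and the Expanding algorithm reports the minimum arrival time at $d$.

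The step I expect to be the main obstacle is the middle paragraph: verifying that the $Split$/$Merge$ accounting keeps $B_i$ exactly in step with $\mathcal{W}_i$ across chains of nested expansions, and that replacing $\mathcal{W}_i$ by the coarser $\wv_i$ — swept over the full width of $E_i$, including portions the robot may be unable to reach by time $T_i^d$ — neither discards a downstream event nor yields an arrival time that is too small. For the latter I would observe that, by Lemma \ref{seg-tree}, $\min(T_v(v_l),T_v(v_r))$ always equals the length of an actually realizable motion (a monotone path, or a path that reaches $E_i$, waits there, and then leaves perpendicularly), so it can only over- or exactly estimate the true arrival time at $v$, and the priority-queue ordering then selects the correct minimum.
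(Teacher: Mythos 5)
Your proposal is correct and follows essentially the same route as the paper: the paper obtains Corollary~\ref{col-2} directly by combining the correctness of the naive algorithm (Lemma~\ref{naiveproof}) with Lemmas~\ref{seg-tree} and~\ref{expand-proof}, which is exactly your reduction. You merely spell out details the paper leaves implicit (the re-run of the induction, the $Split$/$Merge$ bookkeeping that keeps $B_i$ synchronised with $\mathcal{W}_i$, and the realizability of $\min(T_v(v_l),T_v(v_r))$), which is consistent with, not divergent from, the paper's argument.
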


\subsection{Wavelet narrowing} \label{narrow}
In the previous section, we described a technique to reduce the number of wavelets originating from the edges by a factor of $n$. Here, we need to address another challenge: reducing the total number of vertex-originated wavelets. As mentioned before, the search regions may overlap and hence, a vertex may create $O(n)$ point wavelets. 
To prevent this from happening, in this section, we propose a procedure called ``Narrowing'' the wavelets.

Let $\wv_1(\qp_1, t_1, \range_1)$ and $\wv_2(\qp_2, t_2, \range_2)$ be a pair of point wavelets whose search regions intersect (i.e., $\range_1 \cap \range_2 \not= \emptyset$). W.l.o.g., assume that $\wv_1$ and $\wv_2$ are propagating toward the north-east. We denote by $p$, the bottom left corner of $\range_1 \cap \range_2$ (see Figure \ref{narrow-fig}). Recall that $T(\qp, p)$ is the minimum arrival time at $p$ from $\qp$. Let $T_1 = t_1 + T(\qp_1, p)$ and $T_2 = t_2 + T(\qp_2, p)$ be the minimum arrival times at $p$ from $\qp_1$ and $\qp_2$, respectively. 
The wavelet which arrives at $p$ first is called the \textit{dominant wavelet}. W.l.o.g., assume $T_1 < T_2$ and thus, $\wv_1$ is dominant.
Now, let $v \in \range_1 \cap \range_2$ be a vertex located inside the intersection of the two search regions. Note that, $\wv_1$ hits $v$ at time $T(\qp_1, v) = T(\qp_1, p) + {\norm{pv} \over \vr}$; and $\wv_2$ hits $v$ at time $T(\qp_2, v) = T(\qp_2, p) + {\norm{pv} \over \vr}$. Since $T(\qp_1, p) < T(\qp_2, p)$, we obtain the following:

\begin{lemma}\label{xy}
Let $\wv_1(\qp_1, t_1, \range_1)$ and $\wv_2(\qp_2, t_2, \range_2)$ be a pair of point wavelets, where $\wv_1$ is the dominant wavelet. For any vertex $v \in \range_1 \cap \range_2$, we have $T(\qp_1, v) < T(\qp_2, v)$.
\end{lemma}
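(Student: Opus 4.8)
The statement to prove is Lemma~\ref{xy}: for a pair of point wavelets $\wv_1(\qp_1, t_1, \range_1)$ and $\wv_2(\qp_2, t_2, \range_2)$ with intersecting search regions, $\wv_1$ dominant, any vertex $v \in \range_1 \cap \range_2$ satisfies $T(\qp_1, v) < T(\qp_2, v)$.

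The plan is to carry out the short computation already set up in the paragraph preceding the statement. First I would recall the setup: since both wavelets propagate north-east and their search regions intersect, the intersection $\range_1 \cap \range_2$ is an axis-parallel rectangle whose bottom-left corner is the point $p$; this corner is the point of $\range_1 \cap \range_2$ closest (in $L_1$) to the sources $\qp_1$ and $\qp_2$, since both sources lie to the south-west of the intersection region. Because a north-east-propagating wavelet originating at $\qp_i$ reaches a point $q$ in its search region exactly when the robot travels the monotone $L_1$-geodesic from the source through the region, and that geodesic to any $q \in \range_1\cap\range_2$ passes through $p$, we get the key additivity relation: the arrival time of $\wv_i$ at any $v \in \range_1\cap\range_2$ decomposes as $T(\qp_i, v) = T(\qp_i, p) + {\norm{pv} \over \vr}$, where $\norm{pv}$ is the common $L_1$ distance from $p$ to $v$ and is independent of $i$.

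Next I would invoke the dominance hypothesis. By definition $\wv_1$ is dominant means $T_1 = t_1 + T(\qp_1, p) < t_2 + T(\qp_2,p) = T_2$; in the notation where $T(\qp_i, p)$ already absorbs the departure time offset (as written in the excerpt), this is simply $T(\qp_1, p) < T(\qp_2, p)$. Adding the common quantity ${\norm{pv} \over \vr} \ge 0$ to both sides preserves the strict inequality, yielding $T(\qp_1, p) + {\norm{pv}\over\vr} < T(\qp_2,p) + {\norm{pv}\over\vr}$, i.e. $T(\qp_1, v) < T(\qp_2, v)$, which is exactly the claim.

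The only real subtlety — and the step I would be most careful about — is justifying the additivity $T(\qp_i, v) = T(\qp_i, p) + {\norm{pv}\over\vr}$, namely that $p$ genuinely lies on a shortest (wait-free) path from $\qp_i$ to $v$ within the relevant geometry. This follows because $v$ lies north-east of $p$ and $p$ lies north-east of each source $\qp_i$ (as $p$ is the bottom-left corner of the intersection of two north-east search regions both containing $v$), so the three points are in $L_1$-monotone position; an $L_1$-geodesic may then be routed through $p$, and the wavelet model defines $T(\qp_i, \cdot)$ precisely via such monotone geodesics inside the search region. Since the remainder is a one-line manipulation of a strict inequality, no further obstacles arise; I would keep the proof to essentially the two displayed lines already present plus this geometric remark.
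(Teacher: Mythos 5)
Your proof is correct and follows essentially the same route as the paper, which establishes the lemma via the decomposition $T(\qp_i, v) = T(\qp_i, p) + {\norm{pv} \over \vr}$ through the bottom-left corner $p$ of $\range_1 \cap \range_2$ and then adds the common term to the dominance inequality $T(\qp_1, p) < T(\qp_2, p)$. Your extra remark justifying the additivity via $L_1$-monotone geodesics is a welcome clarification of a step the paper merely asserts, but it does not change the argument.
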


By the above lemma, 
for any vertex $v \in \range_1 \cap \range_2$, the point source $\qp_2$ cannot be on a shortest path from $s$ to $v$. As a result, it is counter-productive to propagate $\wv_2$ inside $\range_1 \cap \range_2$. Intuitively, we can avoid this by replacing $\wv_2$ with new wavelets that are designated to sweep only inside $\range_1 \setminus \range_2$ (i.e., the areas inside $\range_1$ and outside $\range_2$). 
This procedure is called \textit{Narrowing}. Since the underlying search regions are axis-parallel rectangles, we can narrow a wavelet by replacing its search region by at most four smaller rectangles. As an illustration, in Figure \ref{narrow-fig}, $\wv_2$ is replaced with two new wavelets $\wv'_2$ and $\wv'_3$ with smaller search regions. The details of this procedure are presented in
Algorithm \ref{narrowing}. 
\begin{figure}[t]
\captionsetup[subfigure]{justification=centering}
\centering
 \begin{subfigure}{.4\linewidth}
 \centering
 \includegraphics[width=105pt]{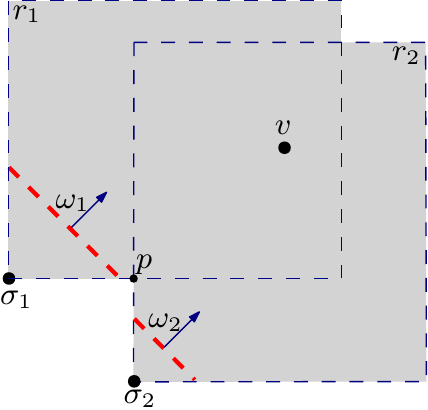}
  \caption{}
 \end{subfigure}
 \begin{subfigure}{.4\linewidth}\centering
 \includegraphics[width=105pt]{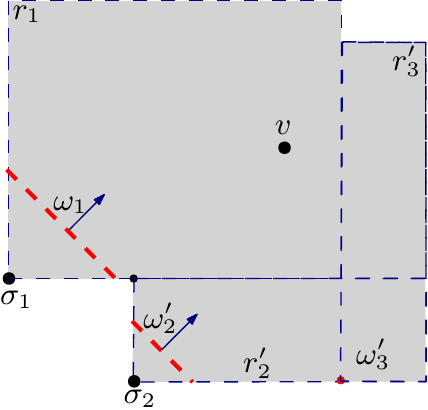}
  \caption{} 
 \end{subfigure}
\caption{An example of narrowing a wavelet; (a) the minimum arrival time at $p$ is identified from two point sources $\qp_1$ and $\qp_2$. The path from $\qp_1$ is faster, so $\wv_1$ is the dominant wavelet; (b) $\wv_2$ is replaced with two wavelets $\wv'_2$ and $\wv'_3$ with smaller search regions.} \label{narrow-fig}
\end{figure} 

\begin{algorithm}[H]
\caption{$Narrow(\wv_1(\qp_1, t_1, \range_1), \wv_2(\qp_2, t_2, \range_2))$} \label{narrowing}
\begin{description}
      \setlength\itemsep{-0.2em}
\item {\bf Input}: Two point wavelets $\wv_1$ and $\wv_2$.
\item {\bf Output}: Narrows $\wv$ and adds new wavelets to the queue if necessary.
\end{description}
\begin{algorithmic}[1]
\IF {$\range_1 \cap \range_2 = \emptyset$} 
 \STATE return
\ENDIF
\STATE w.l.o.g, assume $\wv_1$ and $\wv_2$ are propagating north-east and $\wv_1$ is the dominant wavelet
\STATE let $\range_1=(X_1^w, X_1^e) \times (Y_1^s, Y_1^n)$ and $\range_2=(X_2^w, X_2^e) \times (Y_2^s, Y_2^n)$
\STATE remove $\wv_2$ from the queue
\IF {$X_2^w < X_1^w$} 
 \STATE $\range_1'{=}(X_2^w, X_1^w) \times (Y_2^s, Y_2^n)$
 \STATE add $\wv'_1(\qp_2, t_2, \range_1')$ to the queue
\ENDIF
\IF {$Y_2^s < Y_1^s$} 
 \STATE $\range_2'{=}(X_2^w, X_2^e) \times (Y_2^s, Y_1^s)$
 \STATE add $\wv'_2(\qp_2, t_2, \range_2')$ to the queue
\ENDIF
\IF {$X_2^e > X_1^e$} 
 \STATE $\range_3'{=}(X_1^e, X_2^e) \times (Y_2^s, Y_2^n)$
 \STATE add $\wv'_3(\qp_2, t_2, \range_3')$ to the queue
\ENDIF
\IF {$Y_2^n > Y_1^n$} 
 \STATE $\range_4'{=}(X_2^w, X_2^e) \times (Y_1^n, Y_2^n)$
 \STATE add $\wv'_4(\qp_2, t_2, \range_4')$ to the queue
\ENDIF
\end{algorithmic}
\end{algorithm}

Our approach for reducing the number of point wavelets is based on identifying the dominant wavelets in each iteration. 
One greedy approach is to compare all pairs of the wavelets and narrow the non-dominant ones. However, this may result in quadratic number of narrowings. Our alternative approach is to execute the narrow procedure for every pair of wavelets originating from two vertices $v_1$ and $v_2$, when: (1) there exists two wavelet $\wv_1$ and $\wv_2$ originating from $v_1$ and $v_2$, respectively; and (2) either $\wv_1$ hits $v_2$, or $\wv_1$ and $\wv_2$ hit the same vertex $v_3$. This modification of the Expanding algorithm results in a new algorithm which we call the \textit{Narrowing algorithm}. 
The following corollary is a direct consequence of Lemma \ref{xy}. 

\begin{corollary}\label{col-1} 
The Narrowing algorithm calculates the minimum arrival time at the destination.
\end{corollary}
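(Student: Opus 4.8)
The plan is to argue that the Narrowing algorithm is a refinement of the Expanding algorithm (whose correctness is Corollary \ref{col-2}), in which certain wavelets are replaced by wavelets with strictly smaller search regions, and that no ``useful'' part of any search region is ever discarded. Concretely, I would prove by induction on the number of wavelets extracted from the queue that, for every vertex $v \in V$, there is still some wavelet $\wv((v,t_v),t_v,\range_v)$ that reaches $v$ with $t_v$ equal to the minimum arrival time at $v$ — exactly the invariant ``$v$ is assigned correctly'' used in Lemma \ref{naiveproof}. The only new phenomenon relative to the Expanding algorithm is that $Narrow$ may remove a wavelet $\wv_2$ and re-insert wavelets whose search regions cover $\range_2 \setminus \range_1$ rather than $\range_2$, so the heart of the argument is to show that deleting the part $\range_1 \cap \range_2$ from $\wv_2$'s responsibility loses nothing.

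The key step is an appeal to Lemma \ref{xy}: if $\wv_1$ is the dominant wavelet of the pair and $v \in \range_1 \cap \range_2$, then $T(\qp_1, v) < T(\qp_2, v)$, so a shortest path to $v$ through $\qp_2$ arriving via $\range_1 \cap \range_2$ is never time-minimal; whatever $\wv_2$ would have contributed inside $\range_1 \cap \range_2$ is dominated by what $\wv_1$ (or a wavelet descended from $\wv_1$) contributes. Hence it suffices that some wavelet still sweeps $\range_1 \cap \range_2$, and since $\wv_1$ itself is retained and its search region contains $\range_1 \cap \range_2$, this holds. For $v \in \range_2 \setminus \range_1$, the replacement wavelets $\wv_1',\dots,\wv_4'$ produced by Algorithm \ref{narrowing} are designed so that their search regions cover exactly $\range_2 \setminus \range_1$ (this is the case analysis on the four comparisons $X_2^w < X_1^w$, $Y_2^s < Y_1^s$, $X_2^e > X_1^e$, $Y_2^n > Y_1^n$), all inherit the source $\qp_2$ and departure time $t_2$, and each is a genuine sub-rectangle of $\range_2$; so the behaviour of $\wv_2$ on $\range_2 \setminus \range_1$ is faithfully preserved. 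Combining the two cases, the correctness invariant of Lemma \ref{naiveproof} (and its Expanding-algorithm refinement, Corollary \ref{col-2}) is maintained by every $Narrow$ call, and in particular it holds when the wavelet originating from $d$ is extracted.

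The main obstacle I anticipate is bookkeeping the interaction between $Narrow$ and $Expand$: narrowing applies only to point wavelets, while expanding replaces vertex-originated wavelet cascades on an edge body by a single segment wavelet, and one must check that the triggering condition for $Narrow$ in the Narrowing algorithm — that $\wv_1$ hits $v_2$, or $\wv_1$ and $\wv_2$ hit a common vertex $v_3$ — fires often enough that every overlapping pair that could matter for some vertex is actually narrowed before that vertex is (incorrectly) assigned. I would handle this by observing that if a vertex $v$ lies in $\range_1 \cap \range_2$ for two point wavelets, then following the recursive splitting in Algorithm \ref{propag} (each $PropagatePoint$ call narrows its own region around the closest vertex it hits), the first time both descendants still containing $v$ are simultaneously in the queue, one of them hits the other's source or they share the first vertex hit, triggering $Narrow$; thus dominance is resolved before $v$'s arrival time is committed. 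Modulo that scheduling argument, the corollary follows directly from Lemma \ref{xy} and Corollary \ref{col-2}.
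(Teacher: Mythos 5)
Your argument is essentially the paper's: the paper states Corollary~\ref{col-1} as a direct consequence of Lemma~\ref{xy}, i.e., the dominant wavelet $\wv_1$ (which is retained) already accounts for every vertex in $\range_1 \cap \range_2$, while the narrowed replacements of $\wv_2$ preserve its behaviour on $\range_2 \setminus \range_1$, so the correctness invariant inherited from Lemma~\ref{naiveproof} and Corollary~\ref{col-2} is maintained. Your version simply spells out this dominance-plus-coverage reasoning (and the scheduling caveat, which the paper leaves implicit) in more detail than the paper's one-line justification.
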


In order to prove that the Narrowing algorithm runs in $O(n \log n)$ time, we first establish a linear bound on the number of point wavelets created in the process.

\begin{lemma}\label{narrowed-proof} 
The total number of point wavelets created by the Narrowing algorithm is $O(n)$.
\end{lemma}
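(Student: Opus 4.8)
The plan is to charge the point wavelets to a small set of combinatorial objects and bound each class separately. Recall that point wavelets are created in three situations: (i) as the four wavelets in $\prs(\qp')$ when a wavelet hits a new vertex $v$ (lines \ref{vert-ln}, the $\prs(\cdot)$ additions in Algorithm~\ref{propag} and \ref{propag2}); (ii) as the ``split'' wavelets $\wv_1,\wv_2,\wv_3$ produced when a point wavelet passes a vertex inside its own search region (lines \ref{w1}--\ref{w3} of Algorithm~\ref{propag}), together with the ``leftover'' segment-split wavelets in Algorithm~\ref{propag2}; and (iii) as the at most four narrowed wavelets $\wv_1',\dots,\wv_4'$ produced by a single call to $Narrow$ (Algorithm~\ref{narrowing}). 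Type (i) is easy: a vertex is deleted from $V$ the first time it is reached (line \ref{del-ver-ln} of Algorithm~\ref{propag}, and the analogous permanent labelling), so each of the $O(n)$ vertices triggers $O(1)$ such creations, giving $O(n)$ total. So the whole argument reduces to bounding the number of split events (type (ii)) and the number of $Narrow$ calls (type (iii)), since each of those produces only $O(1)$ wavelets.

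The key step is to bound the number of $Narrow$ calls. By the definition of the Narrowing algorithm, a $Narrow$ call is triggered for a pair of vertices $v_1,v_2$ only when either (1) a wavelet originating at $v_1$ hits $v_2$, or (2) two wavelets originating at $v_1$ and $v_2$ hit a common third vertex $v_3$. For situation (1): I would argue that, at the moment a wavelet $\wv_1$ originating from $v_1$ first reaches a vertex, Lemma~\ref{xy} together with the narrowing invariant guarantees that the search regions in the queue do not overlap over already-processed vertices; consequently each vertex $v_2$ is ``hit by a vertex-originated wavelet'' only $O(1)$ times over the whole run — essentially once per direction — so situation (1) contributes $O(n)$ calls. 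For situation (2): the common target $v_3$ is, by the same deletion argument, hit $O(1)$ times, and each time only $O(1)$ wavelets are simultaneously incident to it (because their search regions have been narrowed to be pairwise non-overlapping near $v_3$, and there are only four inclinations), so the number of pairs $(v_1,v_2)$ sharing a target is $O(n)$ as well. Hence there are $O(n)$ narrowings in total, accounting for $O(n)$ type-(iii) wavelets.

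For the split events (type (ii)), the plan is a potential / deletion argument: each time lines \ref{w1}--\ref{w3} of Algorithm~\ref{propag} fire, the vertex $v_1=\Gamma(\wv)$ is deleted from $V$ (line \ref{del-ver-ln}), and each of the three replacement wavelets inherits a search region strictly contained in the old one and containing strictly fewer vertices. Thus a split event permanently removes one vertex from the ``active'' set; since only $O(n)$ vertices can be removed and each removal is caused by exactly one split event that creates $O(1)$ wavelets, the total number of split-generated wavelets is $O(n)$. The segment-split leftovers in Algorithm~\ref{propag2} are charged similarly to the $O(1)$ times each vertex endpoint $(x_1',y')$ or $(x_2',y')$ of a stop segment becomes a vertex hit, again $O(n)$ in total. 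Summing the three types gives $O(n)$ point wavelets overall.

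The main obstacle I anticipate is the bound on situation (2) of the $Narrow$ trigger: showing that only $O(1)$ vertex-originated wavelets can be simultaneously ``aimed at'' a given vertex $v_3$. This is where the narrowing invariant has to be made precise — one must show that the Narrowing algorithm maintains, as an invariant of the queue, that the search regions of vertex-originated wavelets pairwise overlap only in regions free of unprocessed vertices, so that at most a constant number (bounded by the four propagation inclinations, or a small constant derived from them) can have $v_3$ in their interior at once. Establishing that invariant carefully, and checking that it is preserved by each of $PropagatePoint$, $PropagateSegment$, $Expand$, and $Narrow$, is the technical heart of the proof; the rest is bookkeeping against the $O(n)$ vertices.
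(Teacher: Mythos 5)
There is a genuine gap, and it is exactly the point you flag as your ``main obstacle'': the claim that each vertex is hit by only $O(1)$ vertex-originated wavelets (equivalently, that there are only $O(n)$ Narrow calls) is the entire content of the lemma, and your proposal defers it to an unproven ``narrowing invariant''. Moreover, the invariant as you state it --- that the search regions of queued vertex-originated wavelets pairwise overlap only over regions free of unprocessed vertices --- is not maintained by the algorithm: $Narrow$ (Algorithm~\ref{narrowing}) is invoked only for the specific trigger pairs (a wavelet originating at $v_1$ hits $v_2$, or two wavelets hit a common vertex $v_3$), precisely because the paper rejects all-pairs narrowing as costing a quadratic number of narrowings. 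So the queue may well hold many wavelets whose regions overlap over unprocessed vertices, and your bounds for situations (1) and (2) (``each vertex is hit $O(1)$ times, once per direction'') assume the very per-vertex constant they are supposed to establish. A secondary slip: line~\ref{del-ver-ln} of Algorithm~\ref{propag} deletes the wavelet's \emph{own} origin vertex when that vertex's wavelet is later extracted, not $\Gamma(\wv)$ at the moment of the split in lines~\ref{w1}--\ref{w3}; several queued wavelets can therefore still have the same vertex as their $\Gamma$ before its own wavelet is extracted, so ``each split permanently removes one vertex'' does not bound the splits by itself --- this is exactly why the naive algorithm is quadratic.

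What closes the gap in the paper is a short geometric argument that you are missing. Fix a vertex $v$ and consider the sources of the north-east propagating wavelets that hit $v$; split them by the diagonal through $v$ into two half-quadrants. For two sources $u_i,u_j$ in the same half-quadrant one has $\norm{u_iu_j} < \norm{u_iv}$, so some wavelet originating at $u_i$ (e.g.\ its south-east propagating one) hits $u_j$ strictly before $\wv_i$ could reach $v$. This is precisely the trigger ``$\wv_1$ hits $v_2$'', so $Narrow$ is executed on the pair, and by Lemma~\ref{xy} the non-dominant wavelet is clipped out of the overlap before it can reach $v$. Hence at most one wavelet per half-quadrant, i.e.\ at most two per propagation direction and $O(1)$ per vertex, survives to hit $v$; summing over the $O(n)$ vertices gives the $O(n)$ bound on point wavelets, after which your bookkeeping of the $O(1)$ wavelets created per hit, per split and per $Narrow$ call is fine. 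Without this distance argument (or an equivalent substitute establishing your invariant for wavelets aimed at a common vertex), the proposal does not prove the lemma.
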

\begin{proofsketch} Intuitively, we prove the following: Let $\wv_1$ and $\wv_2$ be two north-east propagating wavelets, originating from $u_1$ and $u_2$, respectively.
 Assume $\wv_1$ and $\wv_2$ both hit a vertex $v$. If $u_2$ lies within a $\norm{u_1v}$ distance from $u_1$ (in any direction), then either $\wv_1$ or $\wv_2$ would be narrowed before hitting $v$, making it impossible to have encountered $v$. So, there are at most two wavelets hitting $v$ from south-west. Therefore each vertex is hit by a constant number of wavelets in total. 
 \end{proofsketch}
 
\begin{proof} 
 In the naive algorithm, each vertex $v \in V$ has associated with it a list of vertices $Z^{NE}{=}\{u_1, $ $...,u_k\}$ that are south-east of $v$. For each vertex $u \in $ there exist a wavelet $\wv_u$ originating from $u$, which is: (1) propagating north-east and (2) hitting vertex $v$. 
 We assume the wavelets in $Z^{NE}$ are sorted in order of increasing $y$-coordinates of their point sources (observe that the order of increasing $y$-coordinates is the same as the order of increasing $x$-coordinates, since the point sources will form a staircase path going north-east). Similar definitions apply to $Z^{NE}$, $Z^{SE}$, and $Z^{SW}$. In the following, we prove that $Z^{NE}$ contains at most two wavelets in the Narrowing algorithm.

For simplicity, we assume $v{=}(0, 0)$. Let $Z_1^{NE}=\{u_1, ..., u_m\}$ be a subset of $Z_1^{NE}$ where for each vertex $u_i{=}(X_{i}, Y_i) \in Z_1^{NE}$ we have $Y_i < X_i$. Let $u_i, u_{j} \in Z_1^{NE}$ be a pair where $1 \leq i < j \leq m$. So, we have $\norm{u_iv} = |X_i| + |Y_i|$ and $\norm{u_iu_j} = |X_i - X_{j}| + |Y_i - Y_{j}|$. It is easily seen that $\norm{u_iu_j} < \norm{u_iv}$. 
Let $\wv_i$ and $\wv_{j}$ be two wavelets originating from $u_i$ and $u_{j}$ which are propagating north-east. 
Let $\wv'_i$ be a point wavelets originating from $u_i$ which is propagating south-east. Since $\norm{u_iu_j} < \norm{u_iv}$, it is easily seen that $\wv'_i$ hits $u_{j}$ before $\wv_i$ hits $v$. Thus, for each pair of wavelets $\wv_i$ and $\wv_{j}$, the non-dominant wavelet is narrowed before a point wavelet in $Z_1^{NE}$ hitting the vertex $v$. Therefore, there is at most one wavelet originating from a vertex in $Z_1^{NE}$ hitting $v$ (the one that is dominant over other wavelets in $Z_1^{NE}$). 

Using a similar argument, there is at most one (dominant) wavelet originating from a vertex in $Z_2^{NE}=\{u_{m+1},..., u_k\}$ which will hit $v$. Thus, there are at most two point wavelets in $Z^{NE}$ hitting the vertex $v$.
\end{proof}
 %

By the above lemma, there are $O(n)$ point wavelets in the queue. Furthermore, in Section \ref{expand}, we proved that the total number of segment wavelets is also $O(n)$. Thus, the running time of the Narrowing algorithm is $O(n \log n)$. Finally, by recording the sequence of the propagations during the process, we can actually construct the time-minimal path among the transient obstacles. Note that, similarly to the optimality argument for the existing $\Theta(n \log n)$ time algorithm \cite{rec} for the non-transient obstacles (which is a special case of our problem), our algorithm is also optimal.

\begin{theorem}\label{final-the} 
A time-minimal rectilinear path among transient rectilinear segments can be found in $\Theta(n \log n)$ time.
\end{theorem}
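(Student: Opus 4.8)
\textbf{Proof proposal for Theorem~\ref{final-the}.}
The plan is to assemble the theorem from the three ingredients already developed: correctness of the Narrowing algorithm (Corollary~\ref{col-1}), the linear bound on point wavelets (Lemma~\ref{narrowed-proof}), and the linear bound on segment wavelets coming from the expanding technique of Section~\ref{expand}. First I would recall that the Narrowing algorithm is exactly the Expanding algorithm of Section~\ref{expand} augmented with the narrowing step, so Corollary~\ref{col-1} guarantees that the value returned is the true minimum arrival time at $d$; moreover, by recording, for each wavelet, the propagation event that created it, a standard back-pointer traversal recovers an actual time-minimal rectilinear path (not merely its arrival time), which handles the ``can be found'' part of the statement.

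Next I would establish the $O(n\log n)$ upper bound. By Lemma~\ref{narrowed-proof} the total number of point wavelets ever created is $O(n)$, and by the expanding argument of Section~\ref{expand} (each edge $E_i$ contributes a single expanded segment wavelet $\wv_i$ rather than the up-to-$n$ wavelets of $\mathcal{W}_i$) the total number of segment wavelets is $O(n)$ as well. Hence the priority queue processes $O(n)$ wavelets in total. Each extracted wavelet triggers one of the constant-size routines $PropagatePoint$, $PropagateSegment$, $Expand$ (one full pass per edge, with $O(\log n)$ per merge/split), or a bounded number of $Narrow$ calls, and I would invoke the range-searching machinery — Lemmas~\ref{spqp}, \ref{stop-point-query-segment}, \ref{queryR}, together with the BST $split$/$merge$ cost — to argue that each such operation runs in $O(\log n)$ time after $O(n\log n)$ preprocessing. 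One point needing a sentence of care: the $Narrow$ procedure is only invoked for pairs of wavelets that meet the two triggering conditions (one hits the other's origin vertex, or both hit a common vertex), and since each vertex is hit by only a constant number of wavelets (Lemma~\ref{narrowed-proof}), the number of $Narrow$ calls is also $O(n)$; this is the step I expect to be the main obstacle, because it requires being precise that the narrowing events are charged to vertex-hit events rather than to arbitrary pairs. Multiplying $O(n)$ events by $O(\log n)$ per event, plus $O(n\log n)$ preprocessing, gives the $O(n\log n)$ bound.

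Finally I would dispatch the matching $\Omega(n\log n)$ lower bound by reduction: the problem of computing a rectilinear shortest path among $n$ non-transient axis-parallel segments is the special case in which every existence interval is $[0,\infty)$, and an $\Omega(n\log n)$ lower bound for that problem in the algebraic computation tree model is known~\cite{rec} (it already holds for a single source--destination query). Since any algorithm for the transient problem solves this special case, the same lower bound applies, so $\Theta(n\log n)$ is tight and the theorem follows.
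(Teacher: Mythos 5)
Your proposal is correct and follows essentially the same route as the paper: correctness from Corollary~\ref{col-1}, the $O(n)$ bounds on point and segment wavelets from Lemma~\ref{narrowed-proof} and the expanding technique, $O(\log n)$ per event via the range-searching structures, and the $\Omega(n\log n)$ lower bound by viewing non-transient rectilinear obstacles~\cite{rec} as a special case. Your explicit accounting of the number of $Narrow$ calls (charging them to vertex-hit events) is a detail the paper leaves implicit, but it does not change the argument.
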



As the algorithm proceeds, by recording the trace of the endpoints of the wavelets, we can build a subdivision of the plane. Since the size of this subdivision is proportional to $n$, by \cite{subdivision}, we can construct a data structure to answer point location queries in $O(\log n)$ time. Thus, we can build the shortest path map with respect to a fixed source point in $O(n \log n)$ time. Now, for a given query point $q$, the minimum arrival time at $q$ from $s$, can be reported in $O(\log n)$ time.

\begin{theorem}\label{final-query} 
Given a set of $n$ transient edges $E$, a fixed source point $s$ and a query point $q$,  $E$ can be preprocessed in $O(n \log n)$ time, so that the minimum arrival time at $q$ can be reported in $O(\log n)$ query time.
\end{theorem}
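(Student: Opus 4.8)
The plan is to show that the Narrowing algorithm, when run from the fixed source $s$, implicitly constructs a \emph{shortest path map}: a planar subdivision $\mathcal{S}$ in which every cell carries enough information to evaluate, in $O(1)$ time, the minimum arrival time of any interior point. First I would augment the Narrowing algorithm so that, whenever a wavelet $\omega(q,t,r)$ is propagated, narrowed, or expanded, it records the trajectories traced by the endpoints (equivalently, the sides) of its search region together with the positions of the wavefront at the instants it splits. Since every search region is an axis-parallel rectangle and every wavelet moves in one of the eight fixed directions, each recorded trajectory is a segment with slope in $\{0,\infty,1,-1\}$; the union of all of them gives the edges of $\mathcal{S}$. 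By construction, the region finally swept by a single (point or segment) wavelet is exactly one cell of $\mathcal{S}$, and by Definition \ref{ws} every point of that cell has a time-minimal path from $s$ whose last source is $q$.

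Next I would bound $|\mathcal{S}| = O(n)$. By Lemma \ref{narrowed-proof} the algorithm creates $O(n)$ point wavelets, and by the expanding argument of Section \ref{expand} it creates $O(n)$ segment wavelets; each wavelet contributes only a constant number of boundary edges to $\mathcal{S}$ (the at most four sides of its search region, plus the at most three split lines introduced when it hits a vertex or a stop segment). Hence $\mathcal{S}$ has $O(n)$ vertices, edges, and faces.

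Then I would specify the label stored with each cell and the corresponding closed form. For a cell swept by a point wavelet with source $\sigma(p,t)$, store $(p,t)$; the arrival time at a query point $q$ in that cell is $t + \norm{pq}/\vr$, which is correct because, by Lemma \ref{monotone} and Lemma \ref{xy}, the sub-path from $p$ to $q$ inside the cell is monotone and wait-free and $\sigma$ dominates every competing source there. For a cell swept by the expanded wavelet $\omega_i$ of an edge $E_i$, store $E_i$, its disappearance time $T_i^d$, and pointers into $B_i$ allowing retrieval of the two candidate root point sources $\sigma_l,\sigma_r$ straddling $q$ as in Lemma \ref{expand-proof}; the arrival time is then $\min\bigl(T_v(v_l),T_v(v_r)\bigr)$ with $T_v(\cdot)$ given by Lemma \ref{seg-tree}. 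Since $T_v(u)$ is the maximum of two affine functions of the coordinates of $q$, such a cell may need one extra dividing line to become a pair of sub-cells on which the arrival time is affine; this still adds only $O(1)$ edges per expanded wavelet, so the $O(n)$ bound survives. Finally I would preprocess $\mathcal{S}$ for planar point location in $O(n\log n)$ time via the structure of \cite{subdivision}; a query locates $q$ in $O(\log n)$ time, reads the cell's label, and evaluates the appropriate formula in $O(1)$. Together with Theorem \ref{final-the} the total preprocessing is $O(n\log n)$ and the query time $O(\log n)$.

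The main obstacle I expect lies in the second and third steps jointly: arguing rigorously that the recorded traces form a simple planar subdivision whose cells are precisely the \emph{final} swept regions. Because narrowing deletes and re-creates wavelets and expanding merges many wavelets into one, one must verify that the last wavelet to cover each point is well defined and that its stored source (or, for expanded cells, its $\sigma_l/\sigma_r$ pair) genuinely realizes the minimum arrival time there — this is exactly where Lemma \ref{xy} and Corollary \ref{col-1} are invoked — and one must confirm that the extra refinement of expanded-wavelet cells into affine pieces does not inflate the size beyond $O(n)$. Once $\mathcal{S}$ is in hand, the point-location machinery is a black box.
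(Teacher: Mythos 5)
Your proposal is correct and follows essentially the same route as the paper, which itself only sketches the argument: record the traces of the wavelet endpoints during the Narrowing algorithm to obtain a planar subdivision of size $O(n)$ (linear by Lemma \ref{narrowed-proof} and the expanding argument), preprocess it for planar point location via \cite{subdivision}, and answer a query by locating $q$ and evaluating the stored source's arrival-time formula. Your additional care about labelling expanded-wavelet cells (the $\sigma_l/\sigma_r$ retrieval and the refinement into affine pieces) fills in details the paper leaves implicit, and it stays within the $O(n\log n)$ preprocessing and $O(\log n)$ query bounds.
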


\section{Extensions}

We present a few natural generalizations of our algorithm which are worth looking at.

\textbf{Rectilinear Transient Obstacles.} In case the obstacles are rectilinear simple polygons, we can consider the edges of these obstacles as transient segment obstacles. Note that in this case, we need to remove the wavelets that sweep the interior of the obstacles, during their existence interval. These wavelets can be identified in constant time by comparing the arrival/departure time of each wavelet to the existence interval of its corresponding obstacle.

\textbf{Time-Minimal Paths with Fixed Orientations.} The $L_1$ metric is the special case where the edges of the path are in four fixed orientations: $0, \pi/2,
\pi$, and $3\pi/2$. Given a set of $k$ fixed orientations, a $k$-oriented path is a polygonal path with each edge parallel to one of the given orientations (see \cite{fixed-ori}). In such case, the wavelets are also in $k$ inclinations. When $k$ is constant, a generalization of the propagation algorithms (see Section \ref{tcp}) is sufficient to handle these wavelets, without any significant changes.
Since the Euclidean distance can be approximated within an
accuracy of $O(1/k^2)$ \cite{fixed-ori} with $k$-oriented distance, the shortest paths with fixed orientations can be used as a basis to approximate a Euclidean shortest path. However, our algorithm does not immediately guarantee an approximation for a  time-minimal path. This is due to the fact a small delay on the path can create long waiting points in the future. We leave this interesting extension as a future work.

\textbf{Vertical Ray Shooting.} A problem closely related to finding the stop points and stop segments is dynamic vertical ray shooting. In that problem, we are given a set of dynamic planar segments on the plane, and we are interested in finding the first edge lying above a given query point $p$. We can imagine shooting a vertical ray to the north from a source point, and we want to return the first edge which it intersects. The dynamic vertical ray shooting has been solved 
\cite{dyn-ray} in $O(\log n)$ query time
after an $O(n \log n)$ preprocessing time. However, in our problem, the robot is limited by its maximum speed. Thus, this problem is a generalization of vertical ray shooting to a case where there is a maximum speed for the shooting ray. 

\bibliographystyle{plain}
\bibliography{thesis}

\begin{thebibliography}{10}

\bibitem{Agarwal-stab}
P.~K. Agarwal, L.~Arge, and K.~Yi.
\newblock An optimal dynamic interval stabbing-max data structure?
\newblock SODA '05, pages 803--812, 2005.

\bibitem{removeable}
P.~K. Agarwal, N.~Kumar, S.~Sintos, and S.~Suri.
\newblock {Computing Shortest Paths in the Plane with Removable Obstacles}.
\newblock In {\em SWAT 2018}, pages 5:1--5:15, 2018.

\bibitem{Chazelle-MST}
Bernard Chazelle.
\newblock Functional approach to data structures and its use in
  multidimensional searching.
\newblock {\em SIAM J. Comput.}, 17(3):427--462, June 1988.

\bibitem{origi-l1}
P.~J. de~Rezende, D.~T. Lee, and Y.~F. Wu.
\newblock Rectilinear shortest paths with rectangular barriers.
\newblock SCG '85, pages 204--213, New York, NY, USA, 1985. ACM.

\bibitem{fuji0}
K.~Fujimura.
\newblock On motion planning amidst transient obstacles.
\newblock In {\em ICRA 1992}, pages 1488--1493 vol.2, May 1992.

\bibitem{fuji-pixel}
K.~Fujimura.
\newblock Motion planning using transient pixel representations.
\newblock In {\em ICRA 1993}, pages 34--39 vol.2, May 1993.

\bibitem{fuji1}
K.~Fujimura.
\newblock Motion planning amid transient obstacles.
\newblock {\em The International Journal of Robotics Research}, 13(5):395--407,
  1994.

\bibitem{dyn-ray}
Y.~Giora and H.~Kaplan.
\newblock Optimal dynamic vertical ray shooting in rectilinear planar
  subdivisions.
\newblock {\em ACM Trans. Algorithms}, 5(3):28:1--28:51, July 2009.

\bibitem{violation}
J.~Hershberger, N.~Kumar, and S.~Suri.
\newblock {Shortest Paths in the Plane with Obstacle Violations}.
\newblock In {\em ESA 2017}, pages 49:1--49:14, 2017.

\bibitem{suri-cd}
J.~Hershberger and S.~Suri.
\newblock An optimal algorithm for euclidean shortest paths in the plane.
\newblock {\em SIAM Journal on Computing}, 28(6):2215--2256, 1999.

\bibitem{subdivision}
D.~Kirkpatrick.
\newblock Optimal search in planar subdivisions.
\newblock {\em SIAM Journal on Computing}, 12(1):28--35, 1983.

\bibitem{lavalle_2006}
S.~M. LaValle.
\newblock {\em Planning Algorithms}, pages 495--558.
\newblock Cambridge University Press, 2006.

\bibitem{LaValle}
S.~M. LaValle and R.~Sharma.
\newblock Robot motion planning in a changing, partially predictable
  environment.
\newblock In {\em ISIC 1994}, pages 261--266, Aug 1994.

\bibitem{Mi2}
D.T. Lee, C.D. Yang, and C.K. Wong.
\newblock Rectilinear paths among rectilinear obstacles.
\newblock {\em Discrete Applied Mathematics}, 70(3):185 -- 215, 1996.

\bibitem{Mi3}
J.~S. Mitchell.
\newblock L1 shortest paths among polygonal obstacles in the plane.
\newblock {\em Algorithmica}, 8(1-6):55--88, December 1992.

\bibitem{mitchell-cd}
J.~S.~B. Mitchell.
\newblock Shortest paths among obstacles in the plane.
\newblock SCG '93, pages 308--317, New York, NY, USA, 1993. ACM.

\bibitem{rectintersection}
V.K. Vaishnavi and D.~Wood.
\newblock Rectilinear line segment intersection, layered segment trees, and
  dynamization.
\newblock {\em Journal of Algorithms}, 3(2):160 -- 176, 1982.

\bibitem{fixed-ori}
P.~Widmayer, Y.~Wu, and C.~Wong.
\newblock On some distance problems in fixed orientations.
\newblock {\em SIAM Journal on Computing}, 16(4):728--746, 1987.

\bibitem{rec}
C.~Yang, D.~Lee, and C.~Wong.
\newblock Rectilinear path problems among rectilinear obstacles revisited.
\newblock {\em SIAM Journal on Computing}, 24(3):457--472, 1995.

\end{thebibliography}

\end{document}